\def\BibTeX{{\rm B\kern-.05em{\sc i\kern-.025em b}\kern-.08emT\kern-.1667em\lower.7ex\hbox{E}\kern-.125emX}}
\renewcommand\footnotetextcopyrightpermission[1]{} 
  \providecommand\BibTeX{{%
    \normalfont B\kern-0.5em{\scshape i\kern-0.25em b}\kern-0.8em\TeX}}}
\definecolor{airforceblue}{rgb}{0.36, 0.54, 0.66}
\definecolor{bluegray}{rgb}{0.4, 0.6, 0.8}
\definecolor{ceil}{rgb}{0.57, 0.63, 0.81}
\definecolor{celestialblue}{rgb}{0.29, 0.59, 0.82}
\definecolor{cerulean}{rgb}{0.0, 0.48, 0.65}
\definecolor{celadon}{rgb}{0.67, 0.88, 0.69}
\definecolor{yellow-green}{rgb}{0.6, 0.8, 0.2}
\newtheorem{theorem}{Theorem}
\newtheorem{lemma}[theorem]{Lemma}
\newtheorem{corollary}[theorem]{Corollary}
\newtheorem{definition}[theorem]{Defn.}
\newcommand{\bA}{\boldsymbol{A}}
\newcommand{\Sec}[1]{\hyperref[sec:#1]{\S\ref*{sec:#1}}} 
\newcommand{\Eqn}[1]{\hyperref[eq:#1]{(\ref*{eq:#1})}} 
\newcommand{\Fig}[1]{\hyperref[fig:#1]{Fig.\,\ref*{fig:#1}}} 
\newcommand{\Tab}[1]{\hyperref[tab:#1]{Tab.\,\ref*{tab:#1}}} 
\newcommand{\Thm}[1]{\hyperref[thm:#1]{Theorem\,\ref*{thm:#1}}} 
\newcommand{\Fact}[1]{\hyperref[fact:#1]{Fact\,\ref*{fact:#1}}} 
\newcommand{\Lem}[1]{\hyperref[lem:#1]{Lemma\,\ref*{lem:#1}}} 
\newcommand{\Prop}[1]{\hyperref[prop:#1]{Prop.~\ref*{prop:#1}}} 
\newcommand{\Cor}[1]{\hyperref[cor:#1]{Corollary~\ref*{cor:#1}}} 
\newcommand{\Conj}[1]{\hyperref[conj:#1]{Conjecture~\ref*{conj:#1}}} 
\newcommand{\Def}[1]{\hyperref[def:#1]{Definition~\ref*{def:#1}}} 
\newcommand{\Alg}[1]{\hyperref[alg:#1]{Alg.~\ref*{alg:#1}}} 
\newcommand{\Ex}[1]{\hyperref[ex:#1]{Ex.~\ref*{ex:#1}}} 
\newcommand{\Clm}[1]{\hyperref[clm:#1]{Claim~\ref*{clm:#1}}} 
\newcommand{\Obs}[1]{\hyperref[obs:#1]{Obs.~\ref*{obs:#1}}} 
\newcommand{\cc}{D}
\newcommand{\dpath}{DP}
\newcommand{\dbp}{DBP}
\newcommand{\oMatch}{\mathcal{M}}
\newcommand{\oMatchSize}{\hbox{M}}
\newcommand{\uoMatchSize}{\hbox{DM}}
\newcommand{\uoIMatchSize}{\hbox{DIM}}
\newcommand{\frag}{\hbox{Frag}}
\newcommand{\shrink}{\hbox{Shrink}}
\newcommand{\numshrink}{\hbox{numSh}}
\newcommand{\sz}{\hbox{sz}}
\newcommand{\orb}{\hbox{orb}}
\tikzset{%
  gnode/.style={shape=circle,minimum size=3mm,fill,draw=black}
}
\tikzset{myptr/.style={decoration={markings,mark=at position 1 with {\arrow[scale=2,>=stealth]{>}}},postaction={decorate}}}
\def\Hzero{
	\node (1) at (0.5,0) [nd, fill=bluegray] {};
	\node (2) [label={left:\small 0}] at (0.5,0.5) [nd, fill=bluegray] {};
	\draw (1) to (2);
	\node [inner sep = 0,below] at +(0.5,-0.5) {{\Large $H_0$}};
}
\def\Hone{
	\node (1) at (0.5,0) [nd, fill=bluegray] {};
	\node (2) [label={left:\small 2}] at (0.5,0.5) [nd, fill=red!60] {};
	\node (3) [label={left:\small 1}] at (0.5,1) [nd, fill=bluegray] {};
	\draw (1) to (2);
	\draw (2) to (3);
	\node [inner sep = 0,below] at +(0.5,-0.5) {{\Large $H_1$}};
}
\def\Htwo{
	\node (1) at (0,0) [nd, fill=bluegray] {};
	\node (2) at (1,0) [nd, fill=bluegray] {};
	\node (3) [label={left:\small 3}] at (0.5,1) [nd, fill=bluegray] {};
	\draw (1) to (2);
	\draw (1) to (3);
	\draw (2) to (3);
	\node [inner sep = 0,below] at +(0.5,-0.5) {{\Large $H_2$}};
}
\def\Hthree{
	\node (1) at (0.5,0) [nd, fill=bluegray] {};
	\node (2) at (0.5,0.5) [nd, fill=red!60] {};
	\node (3) [label={left:\small 5}] at (0.5,1) [nd, fill=red!60] {};
	\node (4) [label={left:\small 4}] at (0.5,1.5) [nd, fill=bluegray] {};
	\draw (1) to (2);
	\draw (2) to (3);
	\draw (3) to (4);
	\node [inner sep = 0,below] at +(0.5,-0.5) {{\Large $H_3$}};
}
\def\Hfour{
	\node (1) at (0,0) [nd, fill=bluegray] {};
	\node (2) at (1,0) [nd, fill=bluegray] {};
	\node (3) [label={left:\small 7}] at (0.5,0.4) [nd, fill = red!60] {};
	\node (4) [label={left:\small 6}] at (0.5,1) [nd, fill=bluegray] {};
	\draw (1) to (3);
	\draw (2) to (3);
	\draw (3) to (4);
	\node [inner sep = 0,below] at +(0.5,-0.5) {{\Large $H_4$}};
}
\def\Hfive{
	\node (1) at (0,0) [nd, fill=bluegray] {};
	\node (2) at (1,0) [nd, fill=bluegray] {};
	\node (3) [label={left:\small 8}] at (0,1) [nd, fill=bluegray] {};
	\node (4) at (1,1) [nd, fill=bluegray] {};
	\draw (1) to (2);
	\draw (1) to (3);
	\draw (2) to (4);
	\draw (3) to (4);
	\node [inner sep = 0,below] at +(0.5,-0.5) {{\Large $H_5$}};
}
\def\Hsix{
	\node (1) [label={left:\small 9}] at (0.5,0) [nd, fill=yellow-green] {};
	\node (2) [label={left:\small 11}] at (0.5,0.5) [nd, fill=red!60] {};
	\node (3) [label={\small 10}] at (0,1) [nd, fill=bluegray] {};
	\node (4) at (1,1) [nd, fill=bluegray] {};
	\draw (1) to (2);
	\draw (2) to (3);
	\draw (2) to (4);
	\draw (3) to (4);
	\node [inner sep = 0,below] at +(0.5,-0.5) {{\Large $H_6$}};
}
\def\Hseven{
	\node (1) at (0.5,0) [nd, fill=red!60] {};
	\node (2) [label={left:\small 12}] at (0,0.5) [nd, fill=bluegray] {};
	\node (3) at (1,0.5) [nd, fill=bluegray] {};
	\node (4) [label={left:\small 13}] at (0.5,1) [nd, fill=red!60] {};
	\draw (1) to (2);
	\draw (1) to (3);
	\draw (1) to (4);
	\draw (2) to (4);
	\draw (3) to (4);
	\node [inner sep = 0,below] at +(0.5,-0.5) {{\Large $H_7$}};
}
\def\Height{
	\node (1) at (0,0) [nd, fill=bluegray] {};
	\node (2) at (1,0) [nd, fill=bluegray] {};
	\node (3) at (0.5,0.3) [nd, fill=bluegray] {};
	\node (4) [label={left:\small 14}] at (0.5,1) [nd, fill=bluegray] {};
	\draw (1) to (2);
	\draw (1) to (3);
	\draw (1) to (4);
	\draw (2) to (3);
	\draw (2) to (4);
	\draw (3) to (4);	
	\node [inner sep = 0,below] at +(0.5,-0.5) {{\Large $H_8$}};
}
\def\Hnine{
	\node (1) at (0.5,0) [nd, fill=bluegray] {};
	\node (2) at (0.5,0.5) [nd, fill=red!60] {};
	\node (3) [label={left:\small 17}] at (0.5,1) [nd, fill=yellow-green] {};
	\node (4) [label={left:\small 16}] at (0.5,1.5) [nd, fill=red!60] {};
	\node (5) [label={left:\small 15}] at (0.5,2) [nd, fill=bluegray] {};
	\draw (1) to (2);
	\draw (2) to (3);
	\draw (3) to (4);
	\draw (4) to (5);
	\node [inner sep = 0,below] at +(0.5,-0.5) {{\Large $H_9$}};
}
\def\Hten{
	\node (1) [label={right:\small 19}] at (0,0) [nd, fill=bluegray] {};
	\node (2) at (1,0) [nd, fill=bluegray] {};
	\node (3) [label={left:\small 21}] at (0.5,0.5) [nd, fill=red!60] {};
	\node (4) [label={left:\small 20}] at (0.5,1) [nd, fill=yellow-green] {};
	\node (5) [label={left:\small 18}] at (0.5, 1.5) [nd, fill=yellow] {};
	\draw (1) to (3);
	\draw (2) to (3);
	\draw (3) to (4);
	\draw (4) to (5);
	\node [inner sep = 0,below] at +(0.5,-0.5) {{\Large $H_{10}$}};
}
\def\Heleven{
	\node (1) at (0.5,0) [nd, fill=bluegray] {};
	\node (2) at (0,0.5) [nd, fill=bluegray] {};
	\node (3) [label={[label distance=0.01cm]135: \small 23}] at (0.5,0.5) [nd, fill=red!60] {};
	\node (4) at (1,0.5) [nd, fill=bluegray] {};
	\node (5) [label={\small 22}] at (0.5, 1) [nd, fill=bluegray] {};
	\draw (1) to (3);
	\draw (2) to (3);
	\draw (3) to (4);
	\draw (3) to (5);
	\node [inner sep = 0,below] at +(0.5,-0.5) {{\Large $H_{11}$}};
}
\def\Htwelve{
	\node (1) [label={right:\small 24}] at (0,0) [nd, fill=yellow-green] {};
	\node (2) at (1,0) [nd, fill=yellow-green] {};
	\node (3) [label={\small 26}] at (0,0.6) [nd, fill=bluegray] {};
	\node (4) at (1,0.6) [nd, fill=bluegray] {};
	\node (5) [label={\small 25}] at (0.5,1.1) [nd, fill=red!60] {};
	\draw (1) to (3);
	\draw (2) to (4);
	\draw (3) to (4);
	\draw (3) to (5);
	\draw (4) to (5);
	\node [inner sep = 0,below] at +(0.5,-0.5) {{\Large $H_{12}$}};
}
\def\Hthirteen{
	\node (1) [label={[label distance=0.01cm]135:\small 29}] at (0,0) [nd, fill=bluegray] {};
	\node (2) at (1,0) [nd, fill=bluegray] {};
	\node (3) [label={left:\small 30}] at (0.5,0.5) [nd, fill=red!60] {};
	\node (4) [label={left:\small 28}] at (0.5,1) [nd, fill=yellow-green] {};
	\node (5) [label={left:\small 27}] at (0.5, 1.5) [nd, fill=yellow] {};
	\draw (1) to (2);
	\draw (1) to (3);
	\draw (2) to (3);
	\draw (3) to (4);
	\draw (4) to (5);
	\node [inner sep = 0,below] at +(0.5,-0.5) {{\Large $H_{13}$}};
}
\def\Hfourteen{
	\node (1) [label={right:\small 31}] at (0,0) [nd, fill=bluegray] {};
	\node (2) at (1,0) [nd, fill=bluegray] {};
	\node (3) [label={left:\small 33}] at (0.5,0.5) [nd, fill=red!60] {};
	\node (4) [label={\small 32}] at (0,1) [nd, fill=yellow-green] {};
	\node (5) at (1, 1) [nd, fill=yellow-green] {};
	\draw (1) to (3);
	\draw (2) to (3);
	\draw (3) to (4);
	\draw (3) to (5);
	\draw (4) to (5);
	\node [inner sep = 0,below] at +(0.5,-0.5) {{\Large $H_{14}$}};
}
\def\Hfifteen{
	\node (1) at (0.2,0) [nd, fill=bluegray] {};
	\node (2) at (0.8,0) [nd, fill=bluegray] {};
	\node (3) at (0,0.6) [nd, fill=bluegray] {};
	\node (4) at (1,0.6) [nd, fill=bluegray] {};
	\node (5) [label={\small 34}] at (0.5, 1) [nd, fill=bluegray] {};
	\draw (1) to (2);
	\draw (1) to (3);
	\draw (2) to (4);
	\draw (3) to (5);
	\draw (4) to (5);
	\node [inner sep = 0,below] at +(0.5,-0.5) {{\Large $H_{15}$}};
}
\def\Hsixteen{
	\node (1) [label={left:\small 35}] at (0.5,0) [nd, fill=yellow-green] {};
	\node (2) [label={left:\small 38}] at (0.5,0.6) [nd, fill=red!60] {};
	\node (3) [label={right:\small 37}] at (0,1.1) [nd, fill=bluegray] {};
	\node (4) at (1,1.1) [nd, fill=bluegray] {};
	\node (5) [label={left:\small 36}] at (0.5, 1.6) [nd, fill=yellow] {};
	\draw (1) to (2);
	\draw (2) to (3);
	\draw (2) to (4);
	\draw (3) to (5);
	\draw (4) to (5);
	\node [inner sep = 0,below] at +(0.5,-0.5) {{\Large $H_{16}$}};
}
\def\Hseventeen{
	\node (1) [label={left:\small 39}] at (0.5,0) [nd, fill=yellow-green] {};
	\node (2) [label={left:\small 42}] at (0.5,0.6) [nd, fill=red!60] {};
	\node (3) [label={left:\small 40}] at (0,1.1) [nd, fill=bluegray] {};
	\node (4) at (1,1.1) [nd, fill=bluegray] {};
	\node (5) [label={left:\small 41}] at (0.5, 1.6) [nd, fill=yellow] {};
	\draw (1) to (2);
	\draw (2) to (3);
	\draw (2) to (4);
	\draw (2) to (5);
	\draw (3) to (5);
	\draw (4) to (5);
	\node [inner sep = 0,below] at +(0.5,-0.5) {{\Large $H_{17}$}};
}
\def\Heighteen{
	\node (1) at (0,0) [nd, fill=bluegray] {};
	\node (2) at (1,0) [nd, fill=bluegray] {};
	\node (3) [label={left:\small 44}] at (0.5,0.5) [nd, fill=red!60] {};
	\node (4) [label={\small 43}] at (0,1) [nd, fill=bluegray] {};
	\node (5) at (1, 1) [nd, fill=bluegray] {};
	\draw (1) to (2);
	\draw (1) to (3);
	\draw (2) to (3);
	\draw (3) to (4);
	\draw (3) to (5);
	\draw (4) to (5);
	\node [inner sep = 0,below] at +(0.5,-0.5) {{\Large $H_{18}$}};
}
\def\Hnineteen{
	\node (1) [label={left:\small 45}] at (0.5,0) [nd, fill=yellow-green] {};
	\node (2) [label={left:\small 47}] at (0.5,0.6) [nd, fill=red!60] {};
	\node (3) [label={\small 48}] at (0,1.1) [nd, fill=bluegray] {};
	\node (4) at (1,1.1) [nd, fill=bluegray] {};
	\node (5) [label={\small 46}] at (0.5, 1.6) [nd, fill=yellow] {};
	\draw (1) to (2);
	\draw (2) to (3);
	\draw (2) to (4);
	\draw (3) to (4);
	\draw (3) to (5);
	\draw (4) to (5);
	\node [inner sep = 0,below] at +(0.5,-0.5) {{\Large $H_{19}$}};
}
\def\Htwenty{
	\node [label={left:\small 50}] (1) at (0.5,0) [nd, fill=red!60] {};
	\node [label={\small 49}] (2) at (0,0.5) [nd, fill=bluegray] {};
	\node (3) at (0.5,0.5) [nd, fill=bluegray] {};
	\node (4) at (1,0.5) [nd, fill=bluegray] {};
	\node (5) at (0.5, 1) [nd, fill=red!60] {};
	\draw (1) to (2);
	\draw (1) to (3);
	\draw (1) to (4);
	\draw (2) to (5);
	\draw (3) to (5);
	\draw (4) to (5);
	\node [inner sep = 0,below] at +(0.5,-0.5) {{\Large $H_{20}$}};
}
\def\Htwentyone{
	\node (1) [label={left:\small 51}] at (0,0) [nd, fill=yellow-green] {};
	\node (2) at (1,0) [nd, fill=yellow-green] {};
	\node (3) [label={left:\small 53}] at (0,0.6) [nd, fill=bluegray] {};
	\node (4) at (1,0.6) [nd, fill=bluegray] {};
	\node (5) [label={left:\small 52}] at (0.5,1.1) [nd, fill=red!60] {};
	\draw (1) to (2);
	\draw (1) to (3);
	\draw (2) to (4);
	\draw (3) to (4);
	\draw (3) to (5);
	\draw (4) to (5);
	\node [inner sep = 0,below] at +(0.5,-0.5) {{\Large $H_{21}$}};
}
\def\Htwentytwo{
	\node (1) [label={left:\small 54}] at (0.5,0) [nd, fill=yellow-green] {};
	\node (2) at (0.5,0.6) [nd, fill=yellow-green] {};
	\node (3) [label={\small 55}] at (0,1.1) [nd, fill=bluegray] {};
	\node (4) at (1,1.1) [nd, fill=bluegray] {};
	\node (5) at (0.5,1.6) [nd, fill=red!60] {};
	\draw (1) to (3);
	\draw (1) to (4);
	\draw (2) to (3);
	\draw (2) to (4);
	\draw (3) to (4);
	\draw (3) to (5);
	S\draw (4) to (5);
	\node [inner sep = 0,below] at +(0.5,-0.5) {{\Large $H_{22}$}};
}
\def\Htwentythree{
	\node (1) [label={left:\small 56}]  at (0.5,0) [nd, fill=yellow-green] {};
	\node (2) [label={left:\small 58}] at (0.5,0.6) [nd, fill=red!60] {};
	\node (3) [label={\small 57}] at (0,1.1) [nd, fill=bluegray] {};
	\node (4) at (1,1.1) [nd, fill=bluegray] {};
	\node (5) at (0.5, 1.6) [nd, fill=bluegray] {};
	\draw (1) to (2);
	\draw (2) to (3);
	\draw (2) to (4);
	\draw (2) to (5);
	\draw (3) to (4);
	\draw (3) to (5);
	\draw (4) to (5);
	\node [inner sep = 0,below] at +(0.5,-0.5) {{\Large $H_{23}$}};
}
\def\Htwentyfour{
	\node (1) [label={left:\small 59}] at (0,0) [nd, fill=bluegray] {};
	\node (2) [label={below:\small 60}] at (0.6,0.3) [nd, fill=yellow-green] {};
	\node (3) [label={left:\small 61}] at (0,0.6) [nd, fill=red!60] {};
	\node (4) at (0.6,0.9) [nd, fill=yellow-green] {};
	\node (5) at (0,1.2) [nd, fill=bluegray] {};
	\draw (1) to (2) [nd] {};
	\draw (1) to (3) [nd] {};
	\draw (2) to (3) [nd] {};
	\draw (2) to (4) [nd] {};
	\draw (3) to (4) [nd] {};
	\draw (3) to (5) [nd] {};
	\draw (4) to (5) [nd] {};
	\node [inner sep = 0,below] at +(0.5,-0.5) {{\Large $H_{24}$}};
}
\def\Htwentyfive{
	\node (1) at (0,0) [nd, fill=red!60] {};
	\node (2) [label={right:\small 62}]at (1,0) [nd, fill=yellow-green] {};
	\node (3) [label={left:\small 64}] at (0.5,0.5) [nd, fill=bluegray] {};
	\node (4) at (0,1) [nd, fill=bluegray] {};
	\node (5) [label={\small 63}] at (1,1) [nd, fill=red!60] {};
	
	\draw (1) to (2);
	\draw (1) to (3);
	\draw (1) to (4);
	\draw (2) to (5);
	\draw (3) to (4);
	\draw (3) to (5);
	\draw (4) to (5);
	\node [inner sep = 0,below] at +(0.5,-0.5) {{\Large $H_{25}$}};
}
\def\Htwentysix{
	\node (1) [label={left:\small 65}] at (0.5,0) [nd, fill=yellow-green] {};
	\node (2) at (0,0.5) [nd, fill=red!60] {};
	\node (3) [label={below:\small 67}] at (1,0.5) [nd, fill=red!60] {};
	\node (4) at (0.5,1) [nd, fill=bluegray] {};
	\node (5) [label={left:\small 66}] at (0.5,1.6) [nd, fill=bluegray] {};
	\draw (1) to (2);
	\draw (1) to (3);
	\draw (2) to (3);
	\draw (2) to (4);
	\draw (2) to (5);
	\draw (3) to (4);
	\draw (3) to (5);
	\draw (4) to (5);
	\node [inner sep = 0,below] at +(0.5,-0.5) {{\Large $H_{26}$}};
}
\def\Htwentyseven{
	\node (1) at (0,0) [nd, fill=bluegray] {};
	\node (2) at (1,0) [nd, fill=bluegray] {};
	\node (3) [label={left:\small 69}] at (0.5,0.5) [nd, fill=red!60] {};
	\node (4) [label={\small 68}] at (0,1) [nd, fill=bluegray] {};
	\node (5) at (1,1) [nd, fill=bluegray] {};
	
	\draw (1) to (2);
	\draw (1) to (3);
	\draw (1) to (4);
	\draw (2) to (3);
	\draw (2) to (5);
	\draw (3) to (4);
	\draw (3) to (5);
	\draw (4) to (5);
	\node [inner sep = 0,below] at +(0.5,-0.5) {{\Large $H_{27}$}};
}
\def\Htwentyeight{
	\node (1) [label={left:\small 70}] at (0.5,0) [nd, fill=red!60] {};
	\node (2) at (0.5,0.6) [nd, fill=bluegray] {};
	\node (3) at (0,1.1) [nd, fill=bluegray] {};
	\node (4) [label={\small 71}] at (1,1.1) [nd, fill=bluegray] {};
	\node (5) at (0.5,1.6) [nd, fill=red!60] {};
	\draw (1) to (2);
	\draw (1) to (3);
	\draw (1) to (4);
	\draw (2) to (3);
	\draw (2) to (4);
	\draw (2) to (5);
	\draw (3) to (4);
	\draw (3) to (5);
	\draw (4) to (5);
	\node [inner sep = 0,below] at +(0.5,-0.5) {{\Large $H_{28}$}};
}
\def\Htwentynine{
	\node (1) at (0.2,0) [nd, fill=bluegray] {};
	\node (2) at (0.8,0) [nd, fill=bluegray] {};
	\node (3) at (0,0.6) [nd, fill=bluegray] {};
	\node (4) at (1,0.6) [nd, fill=bluegray] {};
	\node (5) [label={\small 72}] at (0.5, 1) [nd, fill=bluegray] {};
	\draw (1) to (2);
	\draw (1) to (3);
	\draw (1) to (4);
	\draw (1) to (5);
	\draw (2) to (3);
	\draw (2) to (4);
	\draw (2) to (5);
	\draw (3) to (4);
	\draw (3) to (5);
	\draw (4) to (5);
	\node [inner sep = 0,below] at +(0.5,-0.5) {{\Large $H_{29}$}};
}
\def\HoneEdge{
	\node (1) at (0.5,0) [nd, fill=bluegray] {};
	\node (2) at (0.5,0.5) [nd, fill=red!60] {};
	\node (3) at (0.5,1) [nd, fill=bluegray] {};
	\draw (1) to (2);
	\draw (2) to (3);
	\node [inner sep = 0,left] at +(0.4,0.75) {0};
	\node [inner sep = 0,below] at +(0.5,-0.5) {{\Large $H_1$}};
}
\def\HtwoEdge{
	\node (1) at (0,0) [nd, fill=bluegray] {};
	\node (2) at (1,0) [nd, fill=bluegray] {};
	\node (3) at (0.5,1) [nd, fill=bluegray] {};
	\draw (1) to (2);
	\draw (1) to (3);
	\draw (2) to (3);
	\node [rotate = 68,inner sep = 0,left] at +(0.15,0.75) {1};
	\node [inner sep = 0,below] at +(0.5,-0.5) {{\Large $H_2$}};
}
\def\HthreeEdge{
	\node (1) at (0.5,0) [nd, fill=bluegray] {};
	\node (2) at (0.5,0.5) [nd, fill=red!60] {};
	\node (3) at (0.5,1) [nd, fill=red!60] {};
	\node (4) at (0.5,1.5) [nd, fill=bluegray] {};
	\draw (1) to (2);
	\draw[densely dotted] (2) to (3);
	\draw (3) to (4);
	\node [rotate = 90,inner sep = 0,left] at +(0.25,0.85) {3};
	\node [rotate = 90,inner sep = 0,left] at +(0.25,1.35) {2};
	\node [inner sep = 0,below] at +(0.5,-0.5) {{\Large $H_3$}};
}
\def\HfourEdge{
	\node (1) at (0,0) [nd, fill=bluegray] {};
	\node (2) at (1,0) [nd, fill=bluegray] {};
	\node (3) at (0.5,0.4) [nd, fill = red!60] {};
	\node (4) at (0.5,1) [nd, fill=bluegray] {};
	\draw (1) to (3);
	\draw (2) to (3);
	\draw (3) to (4);
	\node [rotate = 90,inner sep = 0,left] at +(0.3,0.8) {4};
	\node [inner sep = 0,below] at +(0.5,-0.5) {{\Large $H_4$}};
}
\def\HfiveEdge{
	\node (1) at (0,0) [nd, fill=bluegray] {};
	\node (2) at (1,0) [nd, fill=bluegray] {};
	\node (3) at (0,1) [nd, fill=bluegray] {};
	\node (4) at (1,1) [nd, fill=bluegray] {};
	\draw (1) to (2);
	\draw (1) to (3);
	\draw (2) to (4);
	\draw (3) to (4);
	\node [inner sep = 0,above] at +(0.5,1.1) {5};
	\node [inner sep = 0,below] at +(0.5,-0.5) {{\Large $H_5$}};
}
\def\HsixEdge{
	\node (1) at (0.5,0) [nd, fill=yellow-green] {};
	\node (2) at (0.5,0.5) [nd, fill=red!60] {};
	\node (3) at (0,1) [nd, fill=bluegray] {};
	\node (4) at (1,1) [nd, fill=bluegray] {};
	\draw (1) to (2);
	\draw [densely dotted] (2) to (3);
	\draw [densely dotted] (2) to (4);
	\draw [line width=1.4 pt] (3) to (4);
	\node [rotate = 90,inner sep = 0,left] at +(0.3,0.4) {6};
	\node [inner sep = 0,above] at +(0.5,1.1) {7};
	\node [rotate = -45,inner sep = 0,right] at +(0.35,0.9) {8};
	\node [inner sep = 0,below] at +(0.5,-0.5) {{\Large $H_6$}};
}
\def\HsevenEdge{
	\node (1) at (0.5,0) [nd, fill=red!60] {};
	\node (2) at (0,0.5) [nd, fill=bluegray] {};
	\node (3) at (1,0.5) [nd, fill=bluegray] {};
	\node (4) at (0.5,1) [nd, fill=red!60] {};
	\draw (1) to (2);
	\draw (1) to (3);
	\draw [densely dotted] (1) to (4);
	\draw (2) to (4);
	\draw (3) to (4);
	\node [rotate = 68,inner sep = 0,right] at +(0.1,0.82) {9};
	\node [rotate = 90,inner sep = 0,right] at +(0.36,0.35) {10};
	\node [inner sep = 0,below] at +(0.5,-0.5) {{\Large $H_7$}};
}
\def\HeightEdge{
	\node (1) at (0,0) [nd, fill=bluegray] {};
	\node (2) at (1,0) [nd, fill=bluegray] {};
	\node (3) at (0.5,0.3) [nd, fill=bluegray] {};
	\node (4) at (0.5,1) [nd, fill=bluegray] {};
	\draw (1) to (2);
	\draw (1) to (3);
	\draw (1) to (4);
	\draw (2) to (3);
	\draw (2) to (4);
	\draw (3) to (4);
	\node [rotate = 68, inner sep = 0,right] at +(0.08,0.5) {11};	
	\node [inner sep = 0,below] at +(0.5,-0.5) {{\Large $H_8$}};
}
\def\DirThreePath{
   \node (1) at (0.5,-0.2) [nd] {};
   \node (2) at (-0.1,0.55) [nd] {};
   \node (3) at (1.1,0.55) [nd] {};
   \node (4) at (0.1,1.3) [nd] {};
   \draw[myptr] (1) to (2);
   \draw[myptr] (1) to (3);
   \draw[myptr] (2) to (4);
\node [inner sep = 0,below] at +(0.5,-0.5) {{\Large Directed 3-path}};
}
\def\Diamond{
   \node (1) at (0,0) [nd] {};
   \node (2) at (0,1) [nd] {};
   \node (3) at (1,0) [nd] {};
   \node (4) at (1,1) [nd] {};
   \draw (1) to (2);
   \draw (1) to (3);
   \draw (1) to (4);
   \draw (2) to (4);
   \draw (3) to (4);      
  \node [inner sep = 0,below] at +(0.5,-0.5) {{\Large Diamond}};
}
\def\DirBiPyramid{
   \node (1) at (0.5,-0.2) [nd] {};
   \node (2) at (-0.1,0.55) [nd] {};
   \node (3) at (1.1,0.55) [nd] {};
   \node (4) at (0.1,1.3) [nd] {};
   \node (5) at (0.9,1.3) [nd] {};
   \draw[-stealth'] (1) to (2);
   \draw[-stealth'] (1) to (3);
   \draw[-stealth'] (1) to (4);
   \draw[-stealth'] (1) to (5);
   \draw[-stealth'] (2) to (4);
   \draw[-stealth'] (2) to (5);
   \draw[-stealth'] (3) to (2);
   \draw[-stealth'] (3) to (4);
   \draw[-stealth'] (5) to (3);   
\node [inner sep = 0,below] at +(0.5,-0.5) {{\Large Directed bipyramid}};
}
\def\Wedge {
   \node (1) at (0.75,0) [nd] {};
   \node (2) at (0,1) [nd] {};
   \node (3) at (1.5,1) [nd] {};
   \draw (1) to (2);
   \draw (1) to (3);    
  \node [inner sep = 0,below] at +(0.75,-0.5) {{\Large Wedge}};
}
\def\DirFiveCycleOne{
   \node (1) at (0.5,2) [nd] {$w$};   
   \node (2) at (1.1,1) [nd, fill=white] {$i$};   
   \node (3) at (0.9,0) [nd] {$j$};
   \node (4) at (0.1,0) [nd] {$k$};
   \node (5) at (-0.1,1) [nd] {$l$};

   \draw[-stealth'] (1) to (2);
   \draw[-stealth'] (3) to (2);
   \draw[-stealth'] (4) to (3);
   \draw[-stealth'] (4) to (5);
   \draw[-stealth'] (1) to (5);
\node [inner sep = 0,below] at +(0.5,-0.5) {{\Large (a)}};
}
\def\DirFiveCycleTwo{
   \node (1) at (0.5,2) [nd] {$k$};   
   \node (2) at (1.1,1) [nd] {$j$};   
   \node (3) at (0.9,0) [nd, fill=white] {$i$};
   \node (4) at (0.1,0) [nd] {$w$};
   \node (5) at (-0.1,1) [nd] {$l$};

   \draw[-stealth'] (1) to (2);
   \draw[-stealth'] (2) to (3);
   \draw[-stealth'] (4) to (3);
   \draw[-stealth'] (5) to (4);
   \draw[-stealth'] (1) to (5);
\node [inner sep = 0,below] at +(0.5,-0.5) {{\Large (b)}};
}
\def\DirFiveCycleThree{
   \node (1) at (0.5,2) [nd] {$k$};   
   \node (2) at (1.1,1) [nd] {$l$};   
   \node (3) at (0.9,0) [nd] {$w$};
   \node (4) at (0.1,0) [nd, fill=white] {$i$};
   \node (5) at (-0.1,1) [nd] {$j$};

   \draw[-stealth'] (1) to (2);
   \draw[-stealth'] (3) to (2);
   \draw[-stealth'] (4) to (3);
   \draw[-stealth'] (5) to (4);
   \draw[-stealth'] (1) to (5);
\node [inner sep = 0,below] at +(0.5,-0.5) {{\Large (c)}};
}
\def\DirThreePathForFiveCycle{
   \node (1) at (0.1,2) [nd] {$i$};
   \node (3) at (-0.1,1) [nd] {$j$};
   \node[fill=gray] (4) at (1.1,1) [nd] {$l$};
   \node (5) at (0.5,0) [nd] {$k$};
   \draw[-stealth'] (5) to (3);
   \draw[-stealth'] (5) to (4);
   \draw[-stealth'] (3) to (1);
\node [inner sep = 0,below] at +(0.5,-0.5) {{\Large Directed 3-path}};
}
\def\DirTailedTriOne{
   \node (1) at (0.1,2) [nd] {$i$};
   \node (3) at (-0.1,1) [nd] {$j$};
   \node[fill=gray] (4) at (1.1,1) [nd] {$l$};
   \node (5) at (0.5,0) [nd] {$k$};
   \draw[myptr] (5) to (3);
   \draw[myptr] (5) to (4);
   \draw[myptr] (4) to (3);
   \draw[myptr] (3) to (1);
\node [inner sep = 0,below] at +(0.5,-0.5) {{\Large (1)}};
}
\def\DirTailedTriTwo{
   \node (1) at (0.1,2) [nd] {$i$};
   \node (3) at (-0.1,1) [nd] {$j$};
   \node[fill=gray] (4) at (1.1,1) [nd] {$l$};
   \node (5) at (0.5,0) [nd] {$k$};
   \draw[myptr] (5) to (3);
   \draw[myptr] (5) to (4);
   \draw[myptr] (3) to (4);
   \draw[myptr] (3) to (1);
\node [inner sep = 0,below] at +(0.5,-0.5) {{\Large (2)}};
}
\def\DirTailedTriThree{
   \node (1) at (0.1,2) [nd] {$l$};
   \node (3) at (-0.1,1) [nd] {$k$};
   \node[fill=gray] (4) at (1.1,1) [nd] {$i$};
   \node (5) at (0.5,0) [nd] {$j$};
   \draw[myptr] (3) to (5);
   \draw[myptr] (5) to (4);
   \draw[myptr] (3) to (4);
   \draw[myptr] (3) to (1);
\node [inner sep = 0,below] at +(0.5,-0.5) {{\Large (3)}};
}
\begin{document}

\pagestyle{plain}

\title{Efficiently Counting Vertex Orbits of All 5-vertex Subgraphs, by EVOKE}

\author{Noujan Pashanasangi}
\affiliation{%
  \institution{University of California, Santa Cruz}
  \city{Santa Cruz}
  \state{CA}}
\email{npashana@ucsc.edu}
\authornote{Both authors are supported by NSF Awards CCF-1740850, CCF-1813165, and ARO Award W911NF1910294.}

\author{C. Seshadhri}
\affiliation{%
  \institution{University of California, Santa Cruz}
  \city{Santa Cruz}
  \state{CA}}
\email{sesh@ucsc.edu}
\authornotemark[1]

\begin{abstract}
Subgraph counting is a fundamental task in network analysis. Typically, algorithmic
work is on total counting, where we wish to count the total frequency 
of a (small) pattern subgraph in a large input data set. But many applications require
\emph{local counts} (also called vertex orbit counts) 
wherein, for every vertex $v$ of the input graph, one needs
the count of the pattern subgraph involving $v$. This provides a rich set
of vertex features that can be used in machine learning tasks, especially classification
and clustering. But getting local counts is extremely challenging. Even the easier problem
of getting total counts has received much research attention. Local counts
require algorithms that get much finer grained information, and the sheer output
size makes it difficult to design scalable algorithms.

We present EVOKE, a scalable algorithm that can determine
vertex orbits counts for all 5-vertex pattern subgraphs. In other words,
EVOKE exactly determines, for every vertex $v$ of the input graph
and every 5-vertex subgraph $H$, the number of copies of $H$ that $v$ participates in.
EVOKE can process graphs with tens of millions of edges, within an hour on a commodity
machine. EVOKE is typically hundreds of times faster than previous state of the art algorithms,
and gets results on datasets beyond the reach of previous methods.

Theoretically, we generalize a recent ``graph cutting'' framework to get vertex orbit counts.
This framework generate a collection of polynomial equations relating vertex orbit counts
of larger subgraphs to those of smaller subgraphs. EVOKE carefully exploits the structure
among these equations to rapidly count. We prove and empirically validate that EVOKE
only has a small constant factor overhead over the best (total) 5-vertex subgraph counter.
\end{abstract}

\begin{CCSXML}
<ccs2012>
<concept>
<concept_id>10003752.10003809.10003635</concept_id>
<concept_desc>Theory of computation~Graph algorithms analysis</concept_desc>
<concept_significance>500</concept_significance>
</concept>
<concept>
<concept_id>10003752.10010070.10010099.10003292</concept_id>
<concept_desc>Theory of computation~Social networks</concept_desc>
<concept_significance>300</concept_significance>
</concept>
<concept>
<concept_id>10002951.10003227.10003351</concept_id>
<concept_desc>Information systems~Data mining</concept_desc>
<concept_significance>300</concept_significance>
</concept>
</ccs2012>
\end{CCSXML}

\ccsdesc[500]{Theory of computation~Graph algorithms analysis}
\ccsdesc[300]{Theory of computation~Social networks}
\ccsdesc[300]{Information systems~Data mining}
\keywords{Motif analysis, subgraph counting, orbit counting, pattern cutting, graph orientations}

\maketitle

\section{Introduction}
\label{sec:introduction}
One of the most important algorithmic techniques
in network analysis is \emph{subgraph counting}, also referred to as motif counting or graphlet analysis.
Subgraph counting is basically the problem of counting the frequency of small pattern subgraphs
in a large input graph.  
These techniques have found applications in bioinformatics and biological networks~\cite{Pr07, HoBe+07, PrzuljCJ04}, 
social networks~\cite{o2012identifying,UganderBK13, Fa10, SzTh10, SonKanKim12}, community and dense subgraph detection~\cite{SaSe15, BeGlLe16, Ts15,TPM17}, social sciences~\cite{Burt04, Co88, HoLe70, Po98}, 
and many other applications~\cite{Fa07, BeBoCaGi08, GonenS09, FoDeCo10, agrawal2018large}. (Refer to the tutorial~\cite{SeTi19} for more details on applications.)

Let $G$ denote the input graph, that we wish to analyze.
While the typical description of subgraph counting asks for the total count
of a pattern subgraph in $G$, many applications require \emph{local} counts.
(These are also referred to as graphlet distributions, orbit counts, or $k$-profiles.)
For a given set of patterns, the aim is to find, \emph{for every vertex $v$ of $G$},
the number of patterns that $v$ participates in. This is a much finer grained description
of the graph, and can be used to generate features for vertices. A compelling
application of these local counts are the \emph{graphlet kernel}, where local counts
are used to construct vector representations of vertices for machine learning~\cite{ShervashidzeVPMB09}. In many applications (documented in \Sec{related}), 
one typically wants local counts for all pattern subgraphs of up to a given size.

Subgraph counting is an extremely challenging problem. As shown in previous work,
even for a moderate graph with a few million edges, counts of (say) $5$-vertex
pattern subgraphs can be in the order of billions to trillions~\cite{JhSePi15,AhNe+15,PiSeVi17}. 
This combinatorial explosion is often tamed by clever counting methods that avoid
enumeration, but these are tailored to global counts in $G$. There has been
recent work on randomized methods for local counting, but these require large
parallel hardware even for graphs with tens of millions of edges~\cite{ElShBo16}.

\subsection{Problem Description} \label{sec:desc}

The input $G = (V,E)$ is a simple, undirected graph. Our aim is to get local counts,
for every vertex in $G$, for all the patterns given in \Fig{node_orbits}. 
\Fig{node_orbits} shows all connected subgraphs with at most $5$ vertices. We will refer to these
as \emph{patterns}. (We do not focus on disconnected
patterns; results in~\cite{PiSeVi17} imply that these can be easily determined 
from connected subgraph counts.) Within each pattern,
vertices are present in different ``roles'' or \emph{orbits}. In some patterns
like the 5-cycle ($H_{15}$) and 5-clique ($H_{29}$), there is just one orbit.
In contrast, $H_{10}$ has four different orbits, indicated by the different
colors. Thus, a vertex of $G$
can participate in a copy of $H_{10}$ in four different ways, and we wish to determine
all of these four different counts.
We delay the exact formalism of orbits to \Sec{prelim}.
But hopefully, \Fig{node_orbits} gives
a clear pictorial representation of the 73 different orbits,
numbered individually. 

Our aim is to design an algorithm that: for every vertex $v$ in $V$ and every orbit $\theta$, 
exactly outputs the number of times that $v$ occurs in a copy of $\theta$. Thus, the output
is a set of $73|V|$ counts. (Technically, we ask for induced counts, but can
also get non-induced counts. Details in \Sec{prelim}.)
For example, the count of orbit $17$ is the number
of times that $v$ is the middle of a 4-path, while the count of orbit $15$
is the number of 4-paths that start/end at $v$. Analogously, the count of orbit $34$
is the number of 5-cycles that $v$ participates in.
For a fixed orbit, we refer to these numbers as the \emph{vertex orbit counts} (VOCs).
Collectively (over all orbits), we wish to determine \emph{VOCs
for all $5$-vertex subgraphs}. For convenience, we refer to this as simply $5$-VOCs.
We refer to the total subgraph count as ``global'' counts, which is clearly a
much easier problem.

As can be seen, the desired output is an immensely rich local description of the vertices
of $G$. This output subsumes a number of recent
subgraph counting problems in the data mining community~\cite{AhNe+15,ElShBo15,ElShBo16,PiSeVi17}.

{\bf Main challenges:} To the best of our knowledge,
there is no algorithm that (even approximately) computes all 5-VOCs
even for graphs with tens of millions of edges.
Results on global counting are much faster, but it is not clear how to implement
these ideas for VOCs~\cite{AhNe+15,PiSeVi17}. 
The ORCA package is the only algorithm that actually computes all 5-VOCs,
but it does not terminate after days for graphs with tens of millions of edges.
We give more details of previous work in \Sec{related}. 

From a mathematical standpoint, the challenge is to get all 5-VOCs
without an expensive enumeration. The total number of orbit counts is easily in the order
of trillions, and a fast algorithm should ideally avoid touching each 5-vertex subgraph in $G$. 
On the other hand, VOCs are an extremely fine-grained statistic, so purely 
global methods do not work.

\begin{figure}[th]
\centering
 \includegraphics[width=0.9\columnwidth]{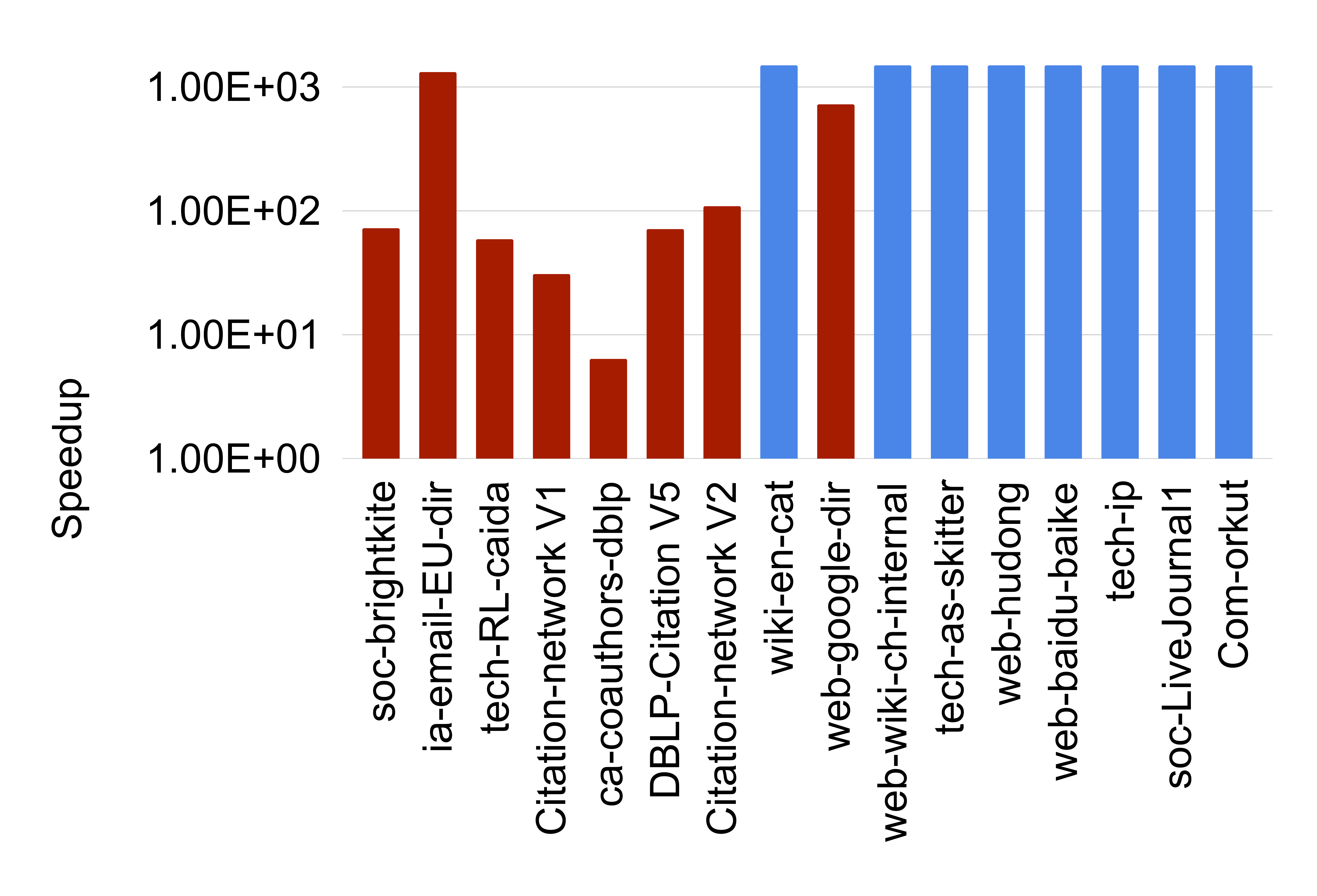}
 \caption{\label{fig:speedup} \small Runtime speedup for computing all 5-VOCs achieved by EVOKE over ORCA (computed as runtime of ORCA/runtime of EVOKE). Graphs are sorted by increasing number of edges from left to right. For the blue bars, ORCA ran out of memory or did not terminate after
 1000 times the EVOKE running time. EVOKE is significantly faster than ORCA, and makes
 5-VOC counting feasible for large graphs.} 
\end{figure}

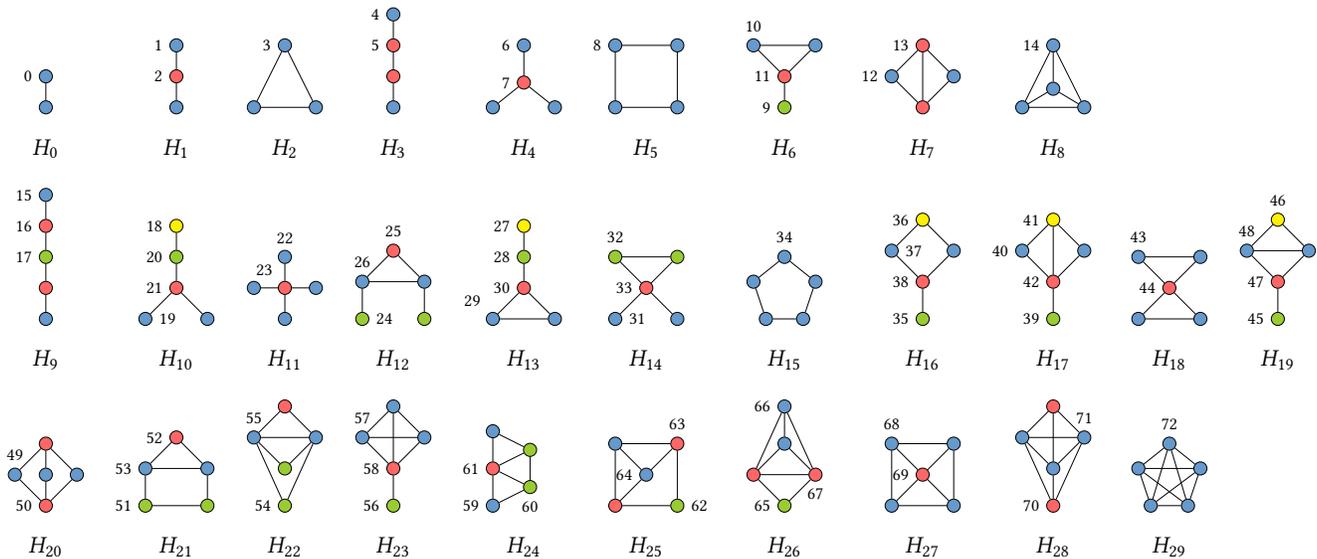
\begin{figure*}[t]
\centering
\resizebox{1\textwidth}{!}{
  \begin{tikzpicture}[nd/.style={scale=1,circle,draw,inner sep=2pt},minimum size = 6pt]    
    \matrix[column sep=0.4cm, row sep=0.4cm,ampersand replacement=\&]
    {
    \Hzero \&
    \Hone \&
    \Htwo \&
    \Hthree \&
    \Hfour \&
   	\Hfive \&
   	\Hsix \&
   	\Hseven \&
   	\Height \\
   	\Hnine \&
   	\Hten\&
   	\Heleven \&
   	\Htwelve \&
   	\Hthirteen \&
   	\Hfourteen \&
   	\Hfifteen \&
   	\Hsixteen \&
   	\Hseventeen \&
   	\Heighteen \&
   	\Hnineteen \\
   	\Htwenty \&
   	\Htwentyone \&
   	\Htwentytwo \&
   	\Htwentythree \&
   	\Htwentyfour \&
  	\Htwentyfive \&
  	\Htwentysix \&
  	\Htwentyseven \&
   	\Htwentyeight \&
   	\Htwentynine \&  	
   	\\
 };
  \end{tikzpicture}
}
\caption{\label{fig:node_orbits} \small All vertex orbits for 5-vertex patterns. Within
any pattern, vertices of the same color form an orbit.}
\end{figure*}

\subsection{Main Contributions} \label{sec:cont}

Our primary result is the Efficient Vertex Orbit pacKagE (EVOKE), an algorithm
to compute all 5-VOCs.

{\bf Practical local counting:} EVOKE advances the state of the art of subgraph counting.
It is the first algorithm that can feasibly obtain all 5-VOCs
on graphs with tens of millions of edges. We do comprehensive tests on many public
data sets. We observe that EVOKE gets counts on graphs with millions of edges
in just minutes, and on graphs with tens of millions of edges within an hour. This
is on a single commodity machine with 64GB memory, without any parallelization.
In contrast, for the larger instances, 
the previous state of the art ORCA package takes more than two days or runs out of memory,
on a more powerful machine (384GB RAM). Even on instances where ORCA terminates, EVOKE
is about a hundred times faster. We show the speedup of EVOKE over ORCA in \Fig{speedup}.
EVOKE is also able to get 5-VOCs in a social network with 100M edges, in less than two days.
(ORCA runs of out memory in such instances.)
All the blue bars in \Fig{speedup} denote instances where ORCA runs of out of memory (in two days) or
is a thousand times slower than EVOKE.

EVOKE has a large number of independent sub-algorithms. It is straightforward to
run them in parallel, and we get about a factor two speedup. We do not consider
this a significant novelty of EVOKE, but it does allow for an even faster running time.

{\bf Local counting without enumeration:} Our work builds on the ESCAPE framework
of Pinar-Seshadhri-Vishal~\cite{PiSeVi17}. One of their main insights
is a combination of graph orientations and a ``pattern cutting'' technique. Larger
patterns are carefully cut into smaller patterns. It is then shown that local counts
of \emph{smaller} patterns can be combined into \emph{global} (total) counts of larger patterns.
We formally prove that, for the orbits in \Fig{node_orbits}, one
can generalize their method to VOCs. This is mathematically quite technical
and requires manipulations of various pattern automorphisms (which is not required
for total counts). But the final result is a large collection
of polynomial formulas to compute individual VOCs through some specialized local counts
of smaller subgraphs. EVOKE exploits the structure among these formulas to count all VOCs
efficiently.

Somewhat surprisingly, we mathematically prove that the running time is only a constant factor
more than that of ESCAPE (which only computes total counts). This is borne out empirically
where the running time of EVOKE is typically twice that of ESCAPE. Our
result demonstrates the power of the cutting framework introduced in~\cite{PiSeVi17}.

{\bf Fast computation of orbit frequency distributions:} The distribution
of VOCs is a useful tool in graph analysis, often called \emph{graphlet
degree distribution} in bioinformatics~\cite{Pr07}. EVOKE makes it feasible to compute
these distributions over real data. As a small demonstration of EVOKE, we observe 
interesting behavior in VOCs across graphs from different domains. Also,
the VOCs of different orbits within the same pattern behave differently, showing
the importance of getting such fine-grained information.

{\bf On 4-VOCs:} We do not consider this as a new contribution, but
a salient observation for those interested in subgraph counting. EVOKE
determines all 4-VOCs as a preprocessing step, based on ideas
in~\cite{PiSeVi17} and Ortmann-Brandes~\cite{ortmann2017efficient}. As stated
in these results, the key insight is an implementation of an old algorithm of
Chiba-Nishizeki for 4-cycle counting~\cite{ChNi85}. This method is incredibly fast,
and computes 4-VOCs in minutes. (Even for the largest instance of
more than 100M edges, it took less than an hour.) For example, for a LiveJournal
social network with 42M edges, EVOKE took ten minutes on a commodity machine
(we got the same time even on a laptop). Contrast this
with previous results for counting 4-VOCs for the same graph,
which used a MapReduce cluster~\cite{ElShBo16}. (We note that EVOKE, and the other results,
are technically computing edge orbit counts, a more general problem.)

\subsection{Related Work} \label{sec:related}

Subgraph counting is an immensely rich area of study, and we refer the reader
to a tutorial for more details~\cite{SeTi19}. Here, we only document
results relevant to our problem. For this reason, we do not discuss the
extremely large body of work on triangle counting (the most basic subgraph counting problem).

Vertex orbit counts beyond triangles have found significant uses in network analysis and machine learning.
Notably, Shervashidze-Vishwanathan-Petri-Mehlhorn-Borgwardt defined the \emph{graphlet kernel}, that uses vertex orbits counts
to get embeddings of vertices in a network~\cite{ShervashidzeVPMB09}.
Ugander-Backstrom-Kleinberg showed that 
4-vertex orbit counts can be used for role discovery and distinguishing different
types of graph neighborhoods~\cite{UganderBK13}. In an exciting recent use of orbit counts,
Rotabi-Kamath-Kleinberg-Sharma showed that four and five cycle counts can be used for weak tie discovery
in the Twitter network~\cite{RKKS17}. Yin-Benson-Leskovec have defined higher-order
clustering coefficients, which are ratios of specific orbit counts~\cite{YiBiKe18,YiBiKe19}.
There is a line of work on
the surprising benefits of using cycle and clique counts as vertex or edge weights,
to find denser and more relevant communities in networks~\cite{BeHe+11,SaSe15,Ts15,BeGlLe16,TPM17}.

We now discuss the literature on algorithms for subgraph counting. Ahmed-Neville-Rossi-Duffield
gave the first algorithm that could count (total) 4-vertex subgraph counts for graphs with millions of edges~\cite{AhNe+15}. Their PGD package was a significant improvement
over past practical work for this problem~\cite{GonenS09}. Pinar-Seshadhri-Vishal designed
the ESCAPE algorithm for practical (total) 5-vertex subgraph counting~\cite{PiSeVi17}. While
these algorithms employed many clever combinatorial ideas, they did not focus on vertex orbit counting.
There was concurrent development of sampling algorithms that are orders
of magnitude faster, such as path-sampling~\cite{JhSePi15} and the MOSS package~\cite{WaZh+18}. 

Elenberg-Shanmugam-Borokhovich-Dimakis gave algorithms for 3, 4-vertex orbit counting~\cite{ElShBo15,ElShBo16}. They employed a randomized algorithm, and proved convergence through polynomial
concentration inequalities. The number of samples required for concentration was large,
and they used Map-Reduce clusters to process graphs with tens of millions of edges.
It was observed implicitly in the ESCAPE package and explicitly, 
by Ortmann-Brandes~\cite{ortmann2017efficient} that ideas from  
a classic result of Chiba-Nishizeki~\cite{ChNi85} gave a faster, exact algorithm
for 4-vertex orbits.

The state of the art for local counting of 5-vertex orbits is the ORCA package
of Ho\v{c}evar-Dem\v{s}ar~\cite{hovcevar2016computation}. The algorithm is based
on a method to build sets of linear equations relating various orbit counts. This saves
computing all orbit counts independently. With some careful choices, ORCA
tries to perform enumeration on the ``easier'' counts, and get the ``harder'' counts
through the linear equations. There were also results on generating
these linear equations auotmatically~\cite{MAM+16,MAM+18}. 
We note that ORCA also has algorithms to generate 5-edge orbit counts, but this takes
even longer than 5-VOCs. We leave the generalization of EVOKE to edge orbit counts as
future work.

Rossi-Ahmed-Carranza-Arbour-Rao-Kim-Koh
proposed a parallel algorithm for counting typed graphlets (subgraph patterns), which are a generalization of subgraph patterns to heterogeneous networks~\cite{rossi2019heterogeneous}.

Most of the exact subgraph and orbit counting algorithms work on subgraphs of size 5 or less. 
Algorithms that attempt to count subgraphs beyond that size typically use randomization,
which has inspired a rich literature~\cite{BKS02, JoGh05, WernickeRasche06,TsKaMiFa09,  KoMiPeTs10, YoKi11, PaTs12, jha2013space, ZhWaBu+12, BhRaRa+12,SePiKo13, KoPiPlSe13, KP13, PaTaTi+13, RaBhHa14, SePiKo14, JhSePi15, LimK15, MVV16, JaSe17, Shin17, CormodeJ17, WaZh+18}.

\section{Preliminaries} \label{sec:prelim}
The input is an undirected simple graph $G= (V, E)$, with $n$ vertices and $m$ edges.
The patterns of interest are all connected subgraphs with at most $5$ vertices,
denoted $H_0, \ldots, H_{29}$, as shown in \Fig{node_orbits}. 
Previous results in~\cite{PiSeVi17} show that disconnected pattern counts can be
determined by inclusion-exclusion from all connected pattern counts. Hence,
we only focus on connected pattern subgraphs.

We now formally define orbits.
The definitions below are taken from Bondy and Murty (Chapter 1, Section 2)~\cite{bondy2008graph}.

\begin{definition} \label{def:auto} Fix labeled graph $H = (V(H), E(H))$.
An \emph{automorphism} is a bijection $\sigma: V(H) \to V(H)$ such that
$(u,v) \in E(H)$ iff $(\sigma(u), \sigma(v)) \in E(H)$.

Define an equivalence relation among $V(H)$ as follows. We say
that $u \sim v$ ($u, v \in V(H)$) iff there exists an automorphism
that maps $u$ to $v$. The equivalence classes of the relation are
called \emph{orbits}.
\end{definition}

\Fig{node_orbits} shows the 73 different orbits. Within any $H_i$,
all vertices in an orbit are colored the same. For example, in $H_{28}$,
there are two different orbits (blue and red). The blue (resp. red) vertices
can be mapped to each other by automorphisms, and are therefore ``equivalent''.

Technically, we denote orbits as pairs $(H,S)$, where $H$ is a (labeled) pattern
subgraph and $S$ is the subset of vertices forming the orbit. 
Consider pattern $H$ and orbit $\theta = (H,S)$. We denote:

\begin{asparaitem}
    \item $\orb(H)$: The set of orbits in the pattern $H$.
    \item $\sz(\theta)$: $|S|$, the number of vertices in the orbit $\theta$.
\end{asparaitem}

\medskip

We can similarly define edge orbits as follows.

\begin{definition} \label{def:edge-orbit} Fix labeled graph $H = (V(H), E(H))$.
Define an equivalence relation among $E(H)$ as follows. Let $e=(u,v)$ and $e^\prime=(w,x)$ be two edges in $H$. We say that $e \sim e^\prime$ iff there exists an automorphism that maps $e$ to $e^\prime$ (i.e. it maps $u$ to $w$ and $v$ to $x$ or $u$ to $x$ and $v$ to $w$). The equivalence classes of the relation are called \emph{edge-orbits}.
\end{definition}

\medskip

{\bf Induced vs non-induced:} A non-induced subgraph is obtained by taking
a subset of edges. An induced subgraph is obtained by taking a subset of vertices
and considering all edges and non-edges among them. (A clique contains
all non-induced subgraphs of smaller sizes, but the only induced subgraphs
it contains are smaller cliques.) A theorem in~\cite{PiSeVi17} proves
that the vector of non-induced subgraph (up to a given size) counts
can be converted to the corresponding induced counts, through a linear
transformation. 
A directed generalization of the arguments
holds for $k$-VOCs, in that non-induced orbit counts (for each vertex)
can be converted to induced orbit counts by a linear transformation. For readability, we move the details to~\Sec{transform} of Appendix. It is a small linear
transformation of the 73-dimensional orbit count for each vertex, and is efficient to do on all vertices.

EVOKE computes both non-induced and induced counts. Algorithmically, it is easier
to compute non-induced counts first; hence we shall only refer to them in the technical
description.

We are ready to define VOCs.

\begin{definition} \label{def:unlabeled-match} Fix an orbit $\theta = (H,S)$ and a vertex $v \in V$ (in the input
graph $G$). A \emph{match} of $\theta$ involving $v$ is a non-induced copy of $H$ in $G$ such that $v$ is mapped to a vertex in $S$. Call two matches \emph{equivalent}, if one can be obtained from the other by applying an automorphism. We define $\uoMatchSize(v,\theta)$ to be the number
of distinct matches of $\theta$ involving $v$.
\end{definition}
Our aim is to compute the entire list of numbers $\{UM(v,\theta)\}$, over all $v \in V$
and all $\theta$ in \Fig{node_orbits}.

\medskip

{\bf Degree ordering:} We will use the \emph{degree orientation}, a fundamental tool for subgraph counting
that was pioneered by Chiba-Nishizeki~\cite{ChNi85}. 
We will convert $G$ into an DAG $G^\rightarrow$ as follows. 
Let $\prec$ denote the \emph{degree ordering} of $G$. For vertices $i,j$, we say $i \prec j$, if either $d(i) < d(j)$ or $d(i) = d(j)$ and $i < j$ (comparing vertex id). 
The DAG $G^\rightarrow$ is obtained by orienting the edges with respect to $\prec$ ordering. 
In both the algorithm and analysis, all references to directed structures are with
respect to $G^\rightarrow$.

{\bf Notation for subgraph counts:} In our formulas for orbit counts, we will use
the following notation. We use $d(v)$ for the degree of vertex $v$. We will
use $W(G)$, $\cc(G)$, $\dpath(G^\rightarrow)$, and $\dbp(G^\rightarrow)$ for 
the total count of wedges, diamonds, directed 3-paths, and directed
bipyramids respectively. These subgraphs are shown in~\Fig{5vertex-basis}.

\begin{figure}[t]
\centering
\resizebox{0.45\textwidth}{!}{
  \begin{tikzpicture}[nd/.style={scale=1,circle,draw,fill=bluegray,inner sep=2pt},minimum size = 6pt]  
    \matrix[column sep=0.4cm, row sep=0.2cm,ampersand replacement=\&]
    {
     \Wedge \&
     \Diamond \&
     \DirThreePath \&
     \DirBiPyramid \\
 };
  \end{tikzpicture}  
  }
\caption{\small Fundamental patterns enumerated for orbit counting}
\label{fig:5vertex-basis}
\end{figure}
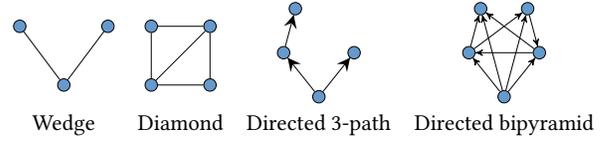

\subsection{Main theorem} \label{sec:mainthm}

\begin{theorem} \label{thm:running-time}
There is an algorithm for exactly counting all VOCs for orbits $0$-$72$, 
whose running time is $O(W(G) + \cc(G) + \dpath(G^\rightarrow) + \dbp(G^\rightarrow) + m + n)$.
\end{theorem}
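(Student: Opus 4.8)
The plan is to lift the ``degree-orientation plus pattern-cutting'' strategy of ESCAPE~\cite{PiSeVi17} from total counts to vertex orbit counts. First I would set up the degree ordering $\prec$ and the orientation $G^\rightarrow$, and show that each of the four fundamental patterns of \Fig{5vertex-basis}---wedges, diamonds, directed $3$-paths, and directed bipyramids---can be enumerated in time linear in its count plus $O(m+n)$. Wedges are the standard per-vertex enumeration over pairs of neighbors, costing $O(W(G)+m+n)$. For diamonds, directed $3$-paths, and directed bipyramids the enumeration pivots on low vertices of $G^\rightarrow$ (the Chiba--Nishizeki trick): bounded average out-degree keeps the work proportional to the number of enumerated copies, i.e. $O(\cc(G))$, $O(\dpath(G^\rightarrow))$, and $O(\dbp(G^\rightarrow))$ respectively. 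During each pass I would maintain auxiliary arrays indexed by $V$ (and by the ``cut'' vertex pairs discovered mid-pass) recording, for every vertex, how many fundamental copies place it in each possible role; these arrays fit within the same time and space budget.

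Second, for each orbit $\theta = (H,S)$ among the $73$ orbits of \Fig{node_orbits} I would derive a fixed-degree polynomial formula for $\uoMatchSize(v,\theta)$ in terms of (i) $d(v)$, (ii) VOCs of strictly smaller orbits evaluated at $v$ and at neighbors of $v$ (available since the $4$-VOC preprocessing supplies the base cases), and (iii) the auxiliary counters from the enumeration passes. The derivation is by cutting: choose a small set of cut vertices/edges of $H$ whose deletion leaves pieces that are either fundamental patterns or smaller already-counted orbits; then a non-induced copy of $H$ with $v$ mapped into $S$ corresponds to a tuple of copies of the pieces agreeing on the cut set, and the number of such tuples is a product of fundamental-pattern counters and smaller VOCs. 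Inclusion--exclusion over the degenerate tuples (pieces sharing extra vertices, so that $H$ collapses to a smaller pattern) removes the overcount and yields $\uoMatchSize(v,\theta)$ exactly.

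Third---the crux---I would pin down the automorphism corrections. For total counts a given cut generates every copy of $H$ a fixed number of times depending only on $|\aut(H)|$ and the $|\aut(\cdot)|$ of the pieces, so one divides by that constant. For VOCs the correction is orbit-sensitive: the cut must be analyzed through the subgroup of $\aut(H)$ preserving it, pushed forward to the product of the automorphism groups of the pieces, while simultaneously tracking which orbit of which piece the distinguished vertex $v$ lands in under each coset. This matters precisely for the asymmetric patterns---e.g. $H_{10}, H_{13}, H_{16}, H_{24}, H_{26}$---where the role of $v$ is not determined by the unlabeled copy of $H$, so different cuts feed different target orbits and the bookkeeping cannot be reduced to a single scalar. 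I expect this to be the main obstacle; I would organize it by first settling all orbits inside patterns with trivial $\aut$, then layering in the symmetric patterns one automorphism generator at a time, validating each formula against brute-force counts on small random graphs.

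Finally I would tally the running time. Building $\prec$, $G^\rightarrow$, and the adjacency structures is $O(m+n)$; the enumeration passes cost $O(W(G) + \cc(G) + \dpath(G^\rightarrow) + \dbp(G^\rightarrow))$ by the first step; evaluating all $73$ orbit formulas at all $n$ vertices is $O(n)$, since each is a constant-size polynomial in $O(1)$ per-vertex quantities and $73$ is a constant; and converting non-induced VOCs to induced VOCs is the fixed $73\times 73$ linear map of \Sec{transform} applied once per vertex, again $O(n)$. Summing these gives the bound in the statement.
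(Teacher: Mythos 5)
Your outline follows the paper's own route (degree orientation, enumeration of the patterns in \Fig{5vertex-basis}, an orbit-aware cutting lemma with automorphism/shrinkage bookkeeping as in \Lem{cut-orb}, and a final linear induced/non-induced conversion), but two steps have genuine gaps. The first is your running-time tally: the claim that ``evaluating all 73 orbit formulas at all $n$ vertices is $O(n)$, since each is a constant-size polynomial in $O(1)$ per-vertex quantities'' is false, and it even contradicts your own second step, which evaluates smaller VOCs at all neighbors of $v$. The cutting formulas necessarily involve sums over the neighborhood of $v$, over wedges and triangles incident to $v$, and for several orbits over diamonds and 4-cliques at $v$; moreover the needed intermediates are not all per-vertex: the paper's formulas use per-edge 4-vertex \emph{edge-orbit} counts $E_0$--$E_{11}$ (its central preprocessing insight), per-pair wedge counts $W(u,v)$, per-edge directed tailed-triangle counts, per-triangle 4-clique counts $K_4(t)$, and per-triple diamond counts $\cc(u,v,x)$ for orbits 68 and 69. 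That last quantity cannot be precomputed and stored at all (there are too many triples), so ``auxiliary arrays indexed by $V$ and by cut vertex pairs'' do not suffice; the paper instead interleaves enumeration and accumulation vertex by vertex and charges each orbit class to $O(m+n)$, $O(W(G)+m+n)$, or $O(W(G)+\cc(G)+m+n)$ as in \Tab{orbits-category}. The final bound is large enough to absorb the true costs, but your accounting as written does not establish it.

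The second gap is that the uniform cutting treatment does not cover orbit 34 (the 5-cycle) or orbit 72 (the 5-clique). For the 5-cycle no convenient cut exists: removing one vertex leaves a 4-path joining two neighbors of $v$, whose extension counts between vertex pairs are exactly the kind of quantity you are trying to compute, and cutting at a nonadjacent pair makes the cut-set matches range over all vertex pairs. The paper handles this with a separate argument (\Thm{5-Cycle-Clique_running-time}): enumerate directed 3-paths in $G^\rightarrow$ against precomputed per-pair directed wedge counts, subtract directed tailed-triangle corrections, and make a second pass to credit the wedge-center vertex --- this is precisely where the $\dpath(G^\rightarrow)$ term comes from. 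The 5-clique is enumerated directly in time $O(\dbp(G^\rightarrow))$ rather than cut. Without these two special cases your proof does not cover all 73 orbits; the rest of your plan, including the orbit-sensitive automorphism corrections via representatives and shrinkage counts, matches the paper's framework.
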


This theorem is analogous to that of ESCAPE (\cite{PiSeVi17}) which gives
the same asymptotic running time for just total counting of 5-vertex subgraphs. 
We consider it quite significant that one gets the same asymptotic running time,
despite the output being much larger and far more fine-grained. We stress
that the EVOKE algorithm is significantly different than ESCAPE, since
the orbit counts behave differently from total subgraph counts.
The final proof is long, and is based on a collection of more than 50 equations for counting different orbits. So, we move the final proof and the equations to~\Sec{fivevertex} of the Appendix.

\section{Main ideas} \label{sec:ideas}

EVOKE builds off the ideas in ESCAPE for total subgraph counts. 
First, we explain difficulties in directly applying previous techniques.

{\bf Pattern cutting:} Intuitively, a 5-vertex pattern can be 
``cut'' into smaller
patterns that can be explicitly enumerated. An enumeration over these smaller
patterns can then be used to get a subgraph count. As an example, consider the 4-path ($H_9$).
By cutting at the center (green) vertex, one gets two wedges. Thus, we can 
basically square the number of wedges that end at a vertex, and then sum this
to get the total number of 4-paths. (Not quite, there is some inclusion-exclusion
required to ``correct'' this count, but it is fairly easy to work out.) But this 
fails for orbit counting. The 4-path has three distinct orbits, and the idea above
only works for the green orbit. 

This is even more problematic for patterns like $H_{21}$, $H_{25}$, $H_{27}$, $H_{28}$,
where the removal of certain vertices does not ``cut'' the pattern into convenient
smaller pieces. The main insight in ESCAPE was that all 5-vertex patterns have a convenient
cutset of vertices, whose removal leads to fragments that can be easily enumerated.
This is not true for orbits. We do have the freedom of choosing the convenient cutset.

{\bf From 4-edge orbit counts to 5-VOCs:} Our main insight is that the suitable generalization
of the pattern cutting approach connects 5-VOCs to 4-\emph{edge} orbit counts.
We essentially prove that many the orbit counts in \Fig{node_orbits} for
a vertex $v$ can be related (by non-trivial polynomial equations)
to the \emph{edge} orbits counts (of 4-vertex subgraphs) on edges incident
to $v$.
The edge orbits of 4-vertex subgraphs are given in~\Fig{edge_orbits}.
These edge orbits counts can be obtained by implementations of the Chiba-Nishizeki
clique and 4-cycle counter~\cite{ChNi85}, with extra inclusion-exclusion tricks to get all counts.
EVOKE uses this as a preprocessing step. We will give more details in~\Sec{counts}.

{\bf Careful indexing during enumeration:} Even with the previous ideas,
we still need an efficient implementation that can generate all the counts.
We design a collection of vertex and edge indexed
data structures, that are updated by an enumeration of the patterns
shown in \Fig{5vertex-basis}. Somewhat surprisingly, we show that as
these patterns are enumerated, one can quickly update these data
structures and generate all the orbit counts. This leads to \Thm{running-time}.

\section{The cutting framework for orbits} \label{sec:cut} 
In this section, we describe the cutting framework for orbits. As mentioned earlier,
this is a generalization of ideas in~\cite{PiSeVi17}. 

First, we formally define a match, which is a non-induced copy of $H$. For a set $C$ where $C \subseteq V(H)$, we use $H|_C$ to denote the subgraph of $H$ induced on $C$. We also denote the remaining graph after removing $C$ from $H$, by $H \setminus C$.

\begin{definition} \label{def:copy}
A \emph{match} of $H$ in $G$ is a bijection $\pi: T \rightarrow V(H)$ where $T \subset V$ and for any two vertices $t_1$ and $t_2$ in $T$, $(t_1, t_2)\in E$ if $(\pi(t_1), \pi(t_2)) \in E(H)$.
\end{definition}

\begin{definition} \label{def:orbit-match} 
Fix an orbit $\theta = (H,S)$ and a vertex $v \in V$. We define $\oMatch(v,\theta)$ to be the set of all (not necessarily distinct) matches $\pi: T \rightarrow V(H)$ of $H$, where $T \subset V$, such that $v \in T$ and $\pi(v) \in S$. We use $\oMatchSize(v, \theta)$ to denote $|\oMatch(v,\theta)|$. 
\end{definition}

\begin{definition} \label{def:representative}
For any orbit $\theta = (H,S)$ we define $\lambda = (H,i)$, where $i$ is a vertex in $S$, as a representative of $\theta$.\\
We use $r(\theta)$ to denote its representative $(H,j)$, where $j$ is the vertex with the smallest id in $S$.
\end{definition}

Let $\lambda = (H,i)$ be a representative of an orbit $\theta$. Abusing notation, for a vertex $v \in V$, we use $\oMatch(v,\lambda)$ to denote the set of matches $\pi \in \oMatch(v,\theta)$ where  $\pi(v) = i$. Analogously, we use $\oMatchSize(v, \lambda)$ to show $|\oMatch(v,\lambda)|$. We can see that $\oMatchSize(v, \theta) =  \sz(\theta) \cdot \oMatchSize(v, \lambda)$. Next, we define \emph{fragments} in $H$, which are the result of cutting $H$ using a cut set.

\begin{definition} \label{def:frag}
Let $H$ be a subgraph pattern and consider a non-trivial cut set $C \subsetneq V(H)$. Let $S_1, S_2, \ldots$ be connected components of $H\setminus C$. The \emph{fragments} of $H$ obtained by removing $C$ are the subgraphs of $H$ induced by $C \cup S_1, C \cup S_2, \ldots$. We denote the set of these fragments by $\frag_C(H)$.
\end{definition}

A partial match $\pi: T \rightarrow V(H)$ is similar to a match, except that it is an injection, and is not surjective, thus $|T| < |V(H)|$.

\begin{definition} \label{def:extention}
A match $\pi:T \rightarrow V(H)$ \emph{extends} a partial match $\sigma: T^\prime \rightarrow V(H)$ if $T^\prime \subset T$ and for any vertex $t$ in $T$, $\pi(t) = \sigma(t)$.
We denote the number of matches $\pi$ of $H$ that extend $\sigma$, by $\deg_H(\sigma)$.
\end{definition}

Consider a match $\sigma$ of $H|_C$. For $\sigma$ to extend to a match of $H$, it is sufficient that it extends to disjoint matches of all fragments in $\frag_C(H)$. Merging these extensions leads to a match of $H$. If extension of $\sigma$ to these fragments are not disjoint, merging them leads to a match of a different pattern $H^\prime$, which we call a \emph{shrinkage}.

\begin{definition} \label{def:shrinkage}
Let $H$, $H'$ be subgraph patterns, $C \subsetneq V(H)$ be a cut set of $H$, and $\frag_C(H)=\{F_1, F_2, \ldots, F_{|\frag_C(H)|}\}$. Let $\tau: H|_C \rightarrow H'$ be a partial match of $H^\prime$. For each $F_i \in \frag_C(H)$, let $\pi_i: F_i \rightarrow H'$ be a partial match of $H^\prime$ in $H$ that extends $\tau$. We call $\{\tau, \pi_1,$ $\pi_2, \ldots,$ $\pi_{|\frag_C(H)|}\}$ a $C$-shrinkage of $H$ into $H^\prime$ if for each edge $(s,t) \in E(H^\prime)$, there exists an edge $(a,b)$ in fragment $F_j \in \frag_C(H)$ such that $\pi_j(a)=s$ and $\pi_j(b)=t$.\\
We use $\shrink_C(H)$ to denote the set of patterns (up to isomorphism) $H^\prime$, to which there exist at least a $C$-shrinkage from $H$.
\end{definition}

\begin{definition} \label{def:number-shrinkage}
Consider graph $H$, $H' \in \shrink_C(H)$, $\lambda = (H,i)$, and $\lambda^\prime = (H^\prime,j)$. We define $\numshrink_C(\lambda, \lambda^\prime)$ to be the number of distinct $C$-shrinkages of $H$ into $H'$ where $\tau (i) = j$.
\end{definition}

\begin{lemma} \label{lem:cut-orb}
Consider a pettern $H$, an orbit $\theta =(H,S)$, a representative $\lambda = (H,i)$ of $\theta$, and a cut set $C$ in $H$ such that $i \in C$. Then,
\begin{align*}
\oMatchSize(v,\lambda) & = \sum_{\sigma \in \oMatch(v, (H|_C, i))} \prod_{F \in \frag_C(H)} \deg_F(\sigma) \\& - \sum_{H' \in \shrink_C(H)} \sum\limits_{\substack{\theta^\prime \in \orb(H^\prime), \\ \lambda^\prime = r(\theta^\prime)}} \numshrink_C(\lambda, \lambda^\prime) \cdot \uoMatchSize(v, \lambda^\prime)
\end{align*}
\end{lemma}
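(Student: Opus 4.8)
The plan is to establish the identity by a counting argument that decomposes every match of $H$ (in which $v$ plays the role of $i$) according to how its restriction to the cut set $C$ behaves, and then corrects for the matches that were overcounted because the fragment-extensions collided.

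First I would fix a vertex $v \in V$ and set up the left-hand side: $\oMatchSize(v,\lambda)$ counts matches $\pi: T \to V(H)$ of $H$ with $v \in T$ and $\pi(v) = i$. For any such $\pi$, its restriction $\sigma = \pi|_{\pi^{-1}(C)}$ is a match of $H|_C$ with $v$ mapped to $i$ (here I use that $i \in C$), so $\sigma \in \oMatch(v,(H|_C,i))$. Conversely, given such a $\sigma$, I would count the ways to build a match of $H$ extending it. The natural attempt is: independently pick, for each fragment $F \in \frag_C(H)$, a match of $F$ extending $\sigma$; there are $\prod_{F} \deg_F(\sigma)$ such tuples, which is exactly the first sum on the right. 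The subtlety, already flagged in the text before the lemma, is that merging a tuple of fragment-extensions only yields a genuine match of $H$ when the images of the fragments are disjoint outside $C$; when they overlap, the merge produces a match of some strictly smaller pattern $H'$ — a $C$-shrinkage of $H$. So the first sum overcounts $\oMatchSize(v,\lambda)$ by exactly the number of tuples whose merge is a proper shrinkage.

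Next I would show that the number of overcounting tuples equals the second (subtracted) sum. The idea is to classify each "bad" tuple by the isomorphism type $H' \in \shrink_C(H)$ of the pattern its merge realizes, and by the resulting match of $H'$ in $G$. A bad tuple $(\sigma, \pi_1, \dots)$ determines a $C$-shrinkage structure $\{\tau, \pi_1', \dots\}$ of $H$ into $H'$ together with a genuine match $\psi$ of $H'$ in $G$; conversely, a $C$-shrinkage into $H'$ (with $\tau(i) = j$ for the chosen representative vertex $j$ of an orbit $\theta'$) composed with a match of $H'$ that sends $v$ to the appropriate vertex recovers a unique bad tuple. Summing over the orbits $\theta'$ of $H'$ and their representatives $\lambda' = r(\theta')$, the number of matches of $H'$ "hitting $v$ in the right spot" is $\uoMatchSize(v,\lambda')$, and the multiplicity with which each such match arises from $H$-tuples is $\numshrink_C(\lambda,\lambda')$. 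This is precisely $\sum_{H'} \sum_{\theta', \lambda' = r(\theta')} \numshrink_C(\lambda,\lambda') \cdot \uoMatchSize(v,\lambda')$, completing the identity.

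\textbf{Main obstacle.} The delicate step is verifying that the correspondence between bad tuples and (shrinkage, $H'$-match) pairs is genuinely a bijection with the stated multiplicity $\numshrink_C(\lambda,\lambda')$ — in particular that the $C$-shrinkage captures \emph{all} the combinatorial data of how the fragments collide (this is why Definition~\ref{def:shrinkage} insists every edge of $H'$ be witnessed inside some fragment), and that passing to the representative $\lambda' = r(\theta')$ rather than to $\lambda'$ itself is accounted for correctly. Here I would lean on the earlier observation that $\oMatchSize(v,\theta) = \sz(\theta)\cdot\oMatchSize(v,\lambda)$ and on the automorphism bookkeeping implicit in $\uoMatchSize$ (distinct matches up to automorphism) versus $\oMatchSize$ (all matches); reconciling the automorphism counts of $H$ and of $H'$ is exactly the "manipulations of various pattern automorphisms" promised in the introduction, and I expect it to be the most technical part of the argument.
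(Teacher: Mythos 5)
Your proposal is correct and follows essentially the same route as the paper's proof: count fragment-extension tuples over matches $\sigma$ of $H|_C$ with $v\mapsto i$, note that genuine matches of $H$ are counted exactly once, and charge each colliding tuple to a $C$-shrinkage into some $H'$ paired with a distinct match of the corresponding orbit representative, giving the correction term $\numshrink_C(\lambda,\lambda')\cdot\uoMatchSize(v,\lambda')$. The ``delicate step'' you flag is handled in the paper exactly as you anticipate, by observing that the shrinkage fixes the labeling of $H'$ (including where $v$ lands, hence the use of the representative $r(\theta')$) while the tuple of fragment maps determines the match.
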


\begin{proof}
Consider any match $\sigma$ of $H|_C$ in $\oMatch(v, (H|_C, i))$, and all sets of maps $\{\pi_1,\ldots, \pi_{|\frag_C(H)|}\}$ where $\pi_\ell$ is a copy of $F_\ell \in \frag_C(H)$ that extends $\sigma$. The number of such sets is exactly:

\begin{equation} 
\sum_{\sigma \in \oMatch(v, (H|_C, i))} \prod_{F \in \frag_C(H)} \deg_F(\sigma) \label{eq:sum-over-maps}
\end{equation}

Consider one of these sets of maps $\{\pi_1,\ldots, \pi_{|\frag_C(H)|}\}$, let $V(\pi_\ell)$ be the set of vertices that $\pi_\ell$ maps to $F_\ell$. If all $V(\pi_\ell) \setminus V(C)$ are disjoint, we get a match in $\oMatch(v, \lambda)$.
Therefore, Each match of $H$ in $\oMatch(v, \lambda)$ is counted exactly one time in \Eqn{sum-over-maps}. But for each orbit $\theta^\prime = (H^\prime,S^\prime)$ where $H^\prime \in \shrink_C(H)$, we have also counted some matches in $\oMatch(v, \theta^\prime)$.
The number of distinct matches of $\theta^\prime$ involving $v$ is $\uoMatchSize(v,\theta^\prime)$. Let $\lambda^\prime=(H^\prime,j)$ be $r(\theta^\prime)$. The number of distinct $C$-shrinkages of $H$ into $H^\prime$, where $\tau(i) = j$, is $\numshrink_C(\lambda, \lambda^\prime)$. Thus, per each orbit $\theta^\prime$, we have counted $\numshrink_C(\lambda, \lambda^\prime) \cdot \uoMatchSize(v, \lambda^\prime)$ matches which should now be subtracted from \Eqn{sum-over-maps}.

The reason we considered only distinct matches of $\lambda^\prime$ involving $v$ is that the shrinkage from $H$ to $H^\prime$ gives us the labeling of $H^\prime$ and the set of maps $\{\pi_1,\ldots, \pi_{|\frag_C(H)|}\}$, which resulted in counting this match, dictates the match. Also, notice that the shrinkage determines the vertex in $H^\prime$ that $v$ is mapped to. That is why we consider number of shrinkages for a representative of $\theta^\prime$.
\end{proof}

\begin{corollary} \label{cor:orbit}
As mentioned, $\oMatchSize(v, \theta) =  \sz(\theta) \cdot \oMatchSize(v, \lambda)$. Therefore, we can derive $\uoMatchSize(v,\theta)$, which is the number of distinct matches of $\theta$, as follows:
$\uoMatchSize(v,\theta) = \sz(\theta) \cdot \oMatchSize(v,\lambda)/|Aut(H)|$.
\end{corollary}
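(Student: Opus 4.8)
The plan is to recast the statement in terms of a free group action. Recall from \Def{orbit-match} that $\oMatch(v,\theta)$ is a set of bijections $\pi : T \to V(H)$ (one element per \emph{labeled} copy of $H$ that hits $S$ at $v$), whereas $\uoMatchSize(v,\theta)$ of \Def{unlabeled-match} counts these copies only up to automorphism. Since the factor $\sz(\theta)$ is already accounted for by the observed identity $\oMatchSize(v,\theta) = \sz(\theta)\cdot\oMatchSize(v,\lambda)$, the only thing that needs proof is $\oMatchSize(v,\theta) = |Aut(H)|\cdot\uoMatchSize(v,\theta)$.

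First I would verify that $Aut(H)$ acts on $\oMatch(v,\theta)$ via $\alpha\cdot\pi := \alpha\circ\pi$. This is well defined: $\alpha\circ\pi$ is again a match by \Def{copy} (because $\alpha$ preserves adjacency, hence the edge implication carries over), and $(\alpha\circ\pi)(v)=\alpha(\pi(v))\in\alpha(S)=S$, using that an automorphism permutes the orbits of $H$ setwise and therefore fixes $S$ (immediate from \Def{auto}: $\pi(v)\in S$ and $\pi(v)\sim\alpha(\pi(v))$ force $\alpha(\pi(v))\in S$). By construction, two matches in $\oMatch(v,\theta)$ lie in the same $Aut(H)$-orbit exactly when one is obtained from the other by an automorphism --- precisely the equivalence of \Def{unlabeled-match}. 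Hence $\uoMatchSize(v,\theta)$ equals the number of $Aut(H)$-orbits on $\oMatch(v,\theta)$.

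Next I would show the action is free: if $\alpha\circ\pi=\pi$ for a match $\pi$, then right-composing with the bijection $\pi^{-1}$ gives $\alpha=\mathrm{id}$, so every stabilizer is trivial. By orbit--stabilizer, each $Aut(H)$-orbit on $\oMatch(v,\theta)$ has size exactly $|Aut(H)|$, giving $\oMatchSize(v,\theta)=|Aut(H)|\cdot\uoMatchSize(v,\theta)$; substituting $\oMatchSize(v,\theta)=\sz(\theta)\cdot\oMatchSize(v,\lambda)$ yields the claim. I do not expect a genuine obstacle here --- the argument is orbit--stabilizer for a free action --- so the only care needed is bookkeeping: matches on different vertex sets $T$ are never equivalent (the action fixes the domain $T$), so no collapsing happens across vertex sets, and one must line up \Def{copy}, \Def{orbit-match}, and \Def{unlabeled-match} so that ``group orbit'' and ``equivalence class of matches'' denote literally the same partition. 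As a consistency check, $|Aut(H)|$ must divide $\sz(\theta)\cdot\oMatchSize(v,\lambda)$, which is automatic since the quotient is the integer $\uoMatchSize(v,\theta)$.
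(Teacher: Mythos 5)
Your argument is correct, and it is exactly the reasoning the paper leaves implicit: the corollary is stated without proof because each distinct (unlabeled) copy of $H$ hitting $v$ in $S$ corresponds to precisely $|Aut(H)|$ labeled matches in $\oMatch(v,\theta)$, which is just your free-action/orbit--stabilizer observation spelled out. So the proposal matches the paper's approach, merely making the well-definedness of the $Aut(H)$-action and the triviality of stabilizers explicit.
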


\begin{figure}[t]
\centering
\resizebox{.45\textwidth}{!}{
  \begin{tikzpicture}[nd/.style={scale=1, circle,draw,fill=bluegray,inner sep=2pt},minimum size = 13pt]    
    \matrix[column sep=0.5cm, row sep=0.2cm,ampersand replacement=\&]
    {
{
  \node (1) at (0.5,2) [nd] {1};
   \node (2) [label={\small 26}] at (0,1) [nd] {2};
   \node (3) [label={\small 26}] at (1,1) [nd] {3};
   \node (4) at (0,0) [nd] {4};
   \node (5) at (1,0) [nd] {5};
   \draw (1) to (2);
   \draw (1) to (3);
   \draw (2) to (3);
   \draw (2) to (4);
   \draw (3) to (5);
   
\node [inner sep = 0,below] at +(0.5,-0.5) {{\Large $H$}};
}     
     \& 

{
  \node (1) at (0.5,2) [nd] {1};
   \node (2) [label={\small 26}] at (0,1) [nd] {2};
   \node (3) [label={\small 26}] at (1,1) [nd] {3};
   \node (4) at (0,0) [nd] {4};
   \draw (1) to (2);
   \draw (1) to (3);
   \draw (2) to (3);
   \draw (2) to (4);
   
\node [inner sep = 0,below] at +(0.5,-0.5) {{\Large $F_1$}};
}     
     \& 

{
  \node (1) at (0.5,2) [nd] {1};
   \node (2) [label={\small 26}] at (0,1) [nd] {2};
   \node (3) [label={\small 26}] at (1,1) [nd] {3};
   \node (5) at (1,0) [nd] {5};
   \draw (1) to (2);
   \draw (1) to (3);
   \draw (2) to (3);
   \draw (3) to (5);
\node [inner sep = 0,below] at +(0.5,-0.5) {{\Large $F_2$}};
}     
     \& 
{
  \node (1) at (0.5,2) [nd] {1};
   \node (2) [label={\small 13}] at (0,1) [nd] {2};
   \node (3) [label={\small 13}] at (1,1) [nd] {3};
   \node (4) at (0.5,0) [nd] {4};
   \draw (1) to (2);
   \draw (1) to (3);
   \draw (2) to (3);
   \draw (2) to (4);
   \draw (3) to (4);
\node [inner sep = 0,below] at +(0.5,-0.5) {{\Large $H'$}};
}
 \\
 };
  \end{tikzpicture}  
  }
\caption{Application of \Lem{cut-orb} for vertex orbit 26}
\label{fig:orbit_26}
\end{figure}

{\bf Application of \Lem{cut-orb} for vertex orbit 26:}
We will show how this lemma works applying it to $H_{12}$ and computing VOCs for a vertex $v \in V$. Let $\theta_{26}=(H,S)$, where $S = \{2,3\}$ and $H$ is as shown in \Fig{orbit_26}, denote orbit 26. 
Let the representative $\lambda_{26}$ be $(H,2)$.

Let triangle $\{ 1,2,3 \}$ be the cut set $C$. So, $\frag_C(H) = \{F_1, F_2\}$ as we can see in \Fig{orbit_26}. Let $\hat{\lambda}=(H|_C,2)$ be a representative of orbit 3 (the only orbit in the cut set). Every triangle in $G$ incident to $v$ is a match in $\oMatch(v,\hat{\lambda})$. Each such triangle has two mappings to $H|_C$. consider triangle $\{u,v,w\}$ in $G$. Vertex $v$ has to be matched to vertex $2$, therefore one match ($A$) is $\sigma(u) = 1$, $\sigma(v) = 2$, and $\sigma(w) = 3$, and the other match ($B$) is $\sigma(u) = 3$, $\sigma(v) = 2$, and $\sigma(w) = 1$.
For match ($A$), $\deg_{F_1}(\sigma) \cdot \deg_{F_2}(\sigma) = (d(v)-2)(d(w)-2)$, and $\deg_{F_1}(\sigma) \cdot \deg_{F_2}(\sigma) = (d(v)-2)(d(u)-2)$ for match ($B$).

The only possible shrinkage of $H$ is to a diamond $H'$, as shown in \Fig{orbit_26}. Let orbit $\theta_{13}=(H^\prime,S^\prime)$, where $S^\prime=\{2,3\}$, show orbit 13. We can see that in any $C$-shrinkage of $H$ into $H'$, $\tau(2) \in S^\prime$. Let $\lambda_{13}=(H^\prime,2)$ be a representative of $\theta_{13}$. Notice that $\numshrink_C(\lambda_{26}, \lambda_{13}) = 2$.
In one case we set $\tau(1) = 1$, $\tau(2) = 2$, $\tau(3) = 3$, $\pi_1(4) = 4$, and $\pi_2(5) = 4$. In the other case, we set $\tau(1) = 4$, $\tau(2) = 2$, $\tau(3) = 3$, $\pi_1(4) = 1$, and $\pi_2(5) = 1$. The set of maps $\{\tau, \pi_1, \pi_2\}$ in both cases forms a $C$-shrinkage of $H$ into $H'$ where $\tau(2) = 2$.

\begin{align}
\oMatchSize(v,\lambda_{26}) & =  \sum_{t = \langle u,v,w\rangle \ \textrm{triangle}} [(d(v)-2) ((d(u)-2) \nonumber \\
&  +(d(w)-2))] - 2 \cdot  \uoMatchSize(v,\lambda_{13}) \label{eq:orb26}
\end{align}

Note that $\sz(\theta_{26}) = 2$ and $H$ has two automorphisms,
so (by \Cor{orbit}) $\uoMatchSize(v,\theta_{26}) = \oMatchSize(v,\lambda_{26})$ .

\section{Getting orbit counts} \label{sec:counts}
\Lem{cut-orb} gives us a collection
of more than fifty equations similar to \Eqn{orb26}. For each of them, we verify that they can be computed through an enumeration of the patterns in \Fig{5vertex-basis}, assuming that all edge orbits of \Fig{edge_orbits} are available. For readability, we move the details of equations for computing 5-VOC, their runtime analysis, and the final proof of \Thm{running-time} to \Sec{fivevertex} of the Appendix, but we give the details of 4 vertex and edge orbit counts and a few example of 5-VOCs equations in this section. We will prove the following theorem for 4-vertex orbit counting.

\begin{theorem} \label{thm:4orbit}
All vertex and edge orbit counts for 4-vertex patterns
can be obtained in time $O(W(G) + \cc(G) + m + n)$.
\end{theorem}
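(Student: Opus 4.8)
The plan is to adapt the degree-orientation enumeration of Chiba--Nishizeki~\cite{ChNi85} — the same machinery used for total $4$- and $5$-vertex counts in ESCAPE~\cite{PiSeVi17} and Ortmann--Brandes~\cite{ortmann2017efficient} — by carrying \emph{per-vertex} and \emph{per-edge} counters through each enumeration pass over $G^\rightarrow$, and then reading off every $4$-vertex orbit count as an $O(1)$-size inclusion--exclusion formula in those counters. The essential point, exactly as in Chiba--Nishizeki, is that the dense patterns ($C_4$, diamond, $K_4$) must \emph{never} be enumerated one copy at a time: that would break the bound, since e.g.\ $\#C_4$ can be polynomially larger than $W(G)$ on complete bipartite graphs. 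Instead one accumulates codegree and triangle multiplicities and evaluates $\binom{\cdot}{2}$-type terms in $O(1)$ each, so the total charge stays at the number of wedges (plus $\cc(G)$ for $K_4$, explained below).

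\emph{Sparse patterns.} A degree/wedge pass gives all orbit counts of the $4$-star ($\binom{d(v)}{3}$ for the center, $\sum_{u\sim v}\binom{d(u)-1}{2}$ for a leaf) and, after one scan over wedges, all orbit counts of the $4$-path (each endpoint/middle count is a sum over wedges at $v$, corrected by subtracting wedges that close into triangles); this scan costs $\sum_v\sum_{u\sim v}d(u)=\sum_u d(u)^2=O(W(G)+m)$. Next, enumerate all triangles via $G^\rightarrow$ — for each $v$, test all pairs of out-neighbors for adjacency — in time $O\bigl(m+\sum_v\binom{d^+(v)}{2}\bigr)=O(W(G)+m)$, using the trivial bound $\sum_v\binom{d^+(v)}{2}\le W(G)$ on the out-degrees $d^+(v)$ in $G^\rightarrow$; maintain per-vertex triangle counts $t(v)$ and per-edge counts $t(u,v)$. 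All tailed-triangle orbit counts then follow by short formulas, e.g.\ the number of tailed triangles with $v$ as the pendant is $\sum_{u\sim v}(t(u)-t(u,v))$, with analogues for the attachment- and triangle-vertex orbits and the three edge orbits.

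\emph{Dense patterns and conversion.} For $C_4$: for each $v$, mark $N(v)$ and, for each $u\sim v$ and each $w\sim u$ with $w\ne v$, increment a counter $c[w]$; then $\binom{c[w]}{2}$ is the number of $C_4$'s through $v$ opposite $w$ (a $C_4$ through $v$ is an unordered pair of length-two $v$--$w$ paths), so $\sum_w\binom{c[w]}{2}$ is the per-vertex count, and attributing $\binom{c[w]}{2}$ to the two $v$-incident edges — halving the edge totals, since each cycle has two such anchors — gives the per-edge count; resetting $c[\cdot]$ via a touched-list keeps the pass at $\sum_u d(u)^2=O(W(G)+m)$, with no charge per $C_4$. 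For diamonds, $\binom{t(u,v)}{2}$ counts those with diagonal $(u,v)$, so the ``diagonal'' orbits cost $O(m)$ and the ``tip'' orbits cost $O(\#\text{triangles})=O(W(G))$ as sums over triangles through $v$. For $K_4$, the number of $K_4$'s containing a triangle $(v,x,y)$ equals $|N(v)\cap N(x)\cap N(y)|$, found by walking the $t(x,y)$-long triangle-apex list of $(x,y)$ against the marked set $N(v)$; summing over all triangles (each in all three roles) costs $\sum_e t(e)^2 = 2\cc(G)+3\#\text{triangles} = O(\cc(G)+W(G)+m)$, and dividing by the appropriate automorphism factor (cf.\ \Cor{orbit}) corrects the over-count, the same bookkeeping giving the $K_4$ edge orbit. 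Finally, these counters are \emph{non-induced}; applying the constant-size linear transformation of \Sec{transform} (the directed analogue of the non-induced-to-induced map of~\cite{PiSeVi17}) to each vertex's and each edge's constant-dimensional orbit vector, in $O(m+n)$, yields the induced counts. Summing the passes gives $O(W(G)+\cc(G)+m+n)$.

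\emph{Main obstacle.} The crux is the ``count, don't enumerate'' step for $C_4$, diamond and $K_4$: one must verify that every orbit count of these patterns really is an $O(1)$-size polynomial in the degree, triangle, and codegree counters, and — the delicate part — that the \emph{per-vertex} and \emph{per-edge} (not merely total) counts can be extracted during the codegree-based passes without extra overhead, relying on the two identities $\sum_v\binom{d^+(v)}{2}\le W(G)$ and $\sum_e t(e)^2=2\cc(G)+3\#\text{triangles}$ to keep the triangle and $K_4$ work inside the stated budget. The remaining difficulty is organizational: each of the (roughly twenty) vertex- and edge-orbit formulas needs its own inclusion--exclusion, discarding copies that collapse into a denser supergraph, together with the correct $|\mathrm{Aut}(H)|$ normalization; this is routine given \Lem{cut-orb} and \Cor{orbit} but must be checked case by case.
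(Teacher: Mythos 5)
Your proposal is correct and takes essentially the same route as the paper: the paper obtains exactly the primitives you build (per-vertex and per-edge triangle, $C_4$, and $K_4$ counts, plus per-edge triangle lists) by invoking the ESCAPE algorithm of Pinar--Seshadhri--Vishal in $O(W(G)+\cc(G)+m+n)$ time, and then reads off every vertex and edge orbit of the 4-vertex patterns via the constant-size cut-based formulas of Lemmas~\ref{lem:4vertexorbit} and~\ref{lem:4edgeorbit}, i.e.\ the same ``count rather than enumerate, then $O(1)$ inclusion--exclusion per orbit'' scheme, with your Chiba--Nishizeki passes merely re-deriving what the paper cites as a black box. One small bookkeeping slip worth fixing: for per-edge $C_4$ counts, ``attributing $\binom{c[w]}{2}$ to the two $v$-incident edges'' is ill-defined once $c[w]>2$; the standard (and clearly intended) step is to add $c[w]-1$ to each edge $(v,u)$ with $u\in N(v)\cap N(w)$ and correct for the resulting factor-two overcount, which stays within the same wedge budget.
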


{\bf Getting 4-VOCs:} The easiest way to demonstrate our framework, is to apply it to orbits in 4-vertex patterns with up to 4 vertex (orbits 0-14). For each vertex $v$ in $G$, let $T(v)$, $C_4(v)$, and $K_4(v)$ denote the number of triangles incident to $v$, the number of 4-cycles incident to $v$, and the number of 4-cliques incident to $v$, respectively. For each edge $e=(u,v)$ in $G$, let $T(e)$, $C_4(e)$, and $K_4(e)$ denote the number of triangles incident to $e$, the number of 4-cyles incident to $e$, and the number of 4-cliques incident to $e$, respectively. For each triangle $t$, let $K_4(t)$ denote the number of 4-cliques including $t$. In~\cite{PiSeVi17}, Pinar-Seshadhri-Vishal have shown that there is an algorithm that in time $O(W(G) + \cc(G) + m + n)$, computes (for all vertices $u$, edges $e=(v,w)$, and triangles $t$): all $T(v)$, $T(e)$, $C_4(v)$, $C_4(e)$, $K_4(v)$, $K_4(e)$, and $K_4(t)$. Their algorithm also obtains for every edge $e$, the list of triangles incident to $e$. Vertex orbit counts of patterns with up to 4 vertices can be computed using the equations presented in \Lem{4vertexorbit}.

\begin{lemma} \label{lem:4vertexorbit}
For $i \in {0,\ldots,14}$, let $\lambda_i = r(\theta_i)$. Then, for each vertex $u \in V$,

$\uoMatchSize(u,\lambda_{0}) = d(u)$

$\uoMatchSize(u,\lambda_{1}) = \sum_{v \in N(u)} d(v) - 1$

$\uoMatchSize(u,\lambda_{2}) = \binom{d(u)}{2}$

$\uoMatchSize(u,\lambda_{3}) = T(u)$

$\uoMatchSize(u,\lambda_{4}) = \sum_{v \in N(u)} \uoMatchSize(v,\lambda_{1}) - 2\uoMatchSize(u,\lambda_{2}) -2\uoMatchSize(u,\lambda_{3})$

$\uoMatchSize(u,\lambda_{5}) = \uoMatchSize(u,\lambda_{1}) (d(u)-1) - 2\uoMatchSize(u,\lambda_{3})$

$\uoMatchSize(u,\lambda_{6}) = \sum_{v \in N(u)} \binom{d(v)-1}{2}$

$\uoMatchSize(u,\lambda_{7}) = \binom{d(u)}{3}$

$\uoMatchSize(u,\lambda_{8}) = C_4(u)$

$\uoMatchSize(u,\lambda_{9}) =  \sum_{v \in N(u)} \uoMatchSize(v,\lambda_{3}) - T(u,v)$

$\uoMatchSize(u,\lambda_{10}) = \sum_{v \in N(u)} T(u,v) (d(v) -2)$

$\uoMatchSize(u,\lambda_{11}) = \uoMatchSize(u,\lambda_{3}) (d(u) -2)$

$\uoMatchSize(u,\lambda_{12}) = \sum_{t=(u,v,x)} T(v,x)-1$

$\uoMatchSize(u,\lambda_{13}) = \sum_{v \in N(u)} \binom{T(u,v)}{2}$

$\uoMatchSize(u,\lambda_{14}) = K_4(u)$

\end{lemma}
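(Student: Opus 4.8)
The plan is to verify the fifteen identities one at a time. In each case I compute the ``anchored'' count $\oMatchSize(u,\lambda_i)$ --- the number of matches of the pattern $H$ that send $u$ to the fixed vertex $j$ of the representative $\lambda_i=(H,j)$ --- and then read off $\uoMatchSize(u,\theta_i)=\sz(\theta_i)\cdot\oMatchSize(u,\lambda_i)/|\aut(H)|$ via \Cor{orbit}, with $\sz(\theta_i)$ and $|\aut(H)|$ taken from \Fig{node_orbits} (e.g.\ $|\aut(H_1)|=|\aut(H_3)|=|\aut(H_6)|=2$, $|\aut(H_4)|=6$, $|\aut(H_5)|=8$, $|\aut(H_7)|=4$, $|\aut(H_8)|=24$, and, crucially, $\sz(\theta_8)=\sz(\theta_{14})=4$ --- not $1$ --- since the $C_4$ and $K_4$ orbits contain all four vertices). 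The three ``anchor'' patterns are immediate from the preprocessing cited above: $\uoMatchSize(u,\lambda_3)=T(u)$, $\uoMatchSize(u,\lambda_8)=C_4(u)$, $\uoMatchSize(u,\lambda_{14})=K_4(u)$ are literally the triangle/$4$-cycle/$4$-clique counts incident to $u$ that~\cite{PiSeVi17} supplies, while the pure-degree orbits $\lambda_0,\lambda_2,\lambda_7$ come from counting ordered tuples of neighbours of $u$ ($d(u)$, $d(u)(d(u)-1)$, $d(u)(d(u)-1)(d(u)-2)$) and dividing by $|\aut|$.

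For the remaining orbits the substructure rooted at $u$ is either ``a neighbour of $u$'' or ``a triangle through $u$'', and the number of extensions is assembled from the precomputed per-vertex/per-edge quantities $d(\cdot)$, $T(\cdot)$, $T(u,\cdot)$ (plus the per-edge triangle lists, needed for $\lambda_{12}$, also provided by~\cite{PiSeVi17}). Concretely: for $\lambda_1$ and $\lambda_6$ (the degree-one vertex of the $3$-path, resp.\ of the claw) one sums $d(v)-1$, resp.\ $\binom{d(v)-1}{2}$, over $v\in N(u)$; for $\lambda_9,\lambda_{10},\lambda_{11}$ (the three paw orbits) one sums, over $v\in N(u)$ or over the triangles $(u,v,x)$ through $u$, the terms $T(v)-T(u,v)$, $T(u,v)(d(v)-2)$, and $(d(v)-2)$ per ordered triangle; for $\lambda_{12},\lambda_{13}$ (the two diamond orbits) one sums $T(v,x)-1$ over triangles through $u$, resp.\ $\binom{T(u,v)}{2}$ over $v\in N(u)$. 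In every one of these cases the four pattern vertices are automatically distinct once the obvious ``$\ne u$'' exclusions are imposed, so no correction term arises and the only thing to check is the $\sz(\theta_i)/|\aut(H)|$ bookkeeping.

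The two genuinely recursive orbits sit on the $4$-path $H_3$. For $\lambda_4$ (an endpoint of $H_3$): prepending the edge from $u$ to $v$ to each $3$-path having $v$ at an endpoint produces the walk $u$--$v$--$a$--$b$, which is a true $P_4$ unless $a=u$ (a $3$-path $v$--$u$--$b$ with $u$ at its centre; $d(u)(d(u)-1)=2\uoMatchSize(u,\lambda_2)$ of them, counted with multiplicity in the sum) or $b=u$ (a triangle at $u$; $\sum_{v\in N(u)}T(u,v)=2\uoMatchSize(u,\lambda_3)$ of them), and these two failures are disjoint --- giving the stated inclusion--exclusion identity. For $\lambda_5$ (an interior vertex of $H_3$) I would instead apply \Lem{cut-orb} with the singleton cut set $C=\{j\}$ consisting of that interior vertex: $\frag_C(H_3)$ is one edge together with one $3$-path, so the leading term of \Lem{cut-orb} is $d(u)\cdot\uoMatchSize(u,\lambda_1)$, and $\shrink_C(H_3)=\{H_1,H_2\}$; one checks that the coincidence of the two fragments' free vertices yields either a copy of $H_1$ with $u$ in its endpoint orbit ($\numshrink_C(\lambda_5,\lambda_1)=1$, $\numshrink_C(\lambda_5,\lambda_2)=0$) or a triangle ($\numshrink_C(\lambda_5,\lambda_3)=2$), whereupon \Cor{orbit} collapses this to $\uoMatchSize(u,\lambda_5)=(d(u)-1)\uoMatchSize(u,\lambda_1)-2\uoMatchSize(u,\lambda_3)$.

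I expect the main obstacle to be exactly this shrinkage accounting in the one or two places where \Lem{cut-orb} is actually used: one must enumerate every $H'\in\shrink_C(H)$, decide which orbit of $H'$ the distinguished vertex $u$ falls into under each valid $C$-shrinkage, and count those shrinkages against the canonical representative $r(\theta')$ rather than against an arbitrary vertex of $\theta'$ --- conflating the endpoint and middle orbits of $H_1$, or double-counting the two symmetric embeddings of $H_3$ into a triangle, is precisely how the constant multiplying $\uoMatchSize(u,\lambda_3)$ would come out wrong. Everything else reduces to an ordered-tuple count followed by an automorphism division, and since the dependency graph among orbits $0$--$14$ (degrees $\to\lambda_1\to\lambda_4,\lambda_5$; $T,C_4,K_4\to$ the clique/cycle orbits; $T$ and $T(\cdot,\cdot)\to$ the paw/diamond orbits) is acyclic, the identities can be evaluated in a single bottom-up pass, which is what is needed for the running-time claim of \Thm{4orbit}.
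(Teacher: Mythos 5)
Your proposal is correct and takes essentially the same route as the paper's (very terse) proof: direct counting for the easy orbits and the cutting framework of \Lem{cut-orb}/\Cor{orbit} with a vertex, edge, or triangle cut set for the rest --- including the triangle-cut argument for $\lambda_{12}$ --- so your write-up is simply a more detailed version of the paper's argument. One cosmetic slip worth fixing: in your summary of the paw orbits, the per-triangle factor for $\lambda_{11}$ should be $d(u)-2$ (the tail attaches at $u$, avoiding the two triangle neighbours of $u$), not $d(v)-2$, in agreement with the stated formula $\uoMatchSize(u,\lambda_{11})=\uoMatchSize(u,\lambda_{3})(d(u)-2)$.
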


\begin{proof}
For $\theta_{12}$, we use a triangle as the cut set. The remaining component is a vertex, which forms a triangle with the edge $(v,x)$. After mapping a triangle $t=(u,v,x)$ to the cut set, we need to select a vertex to extend the copy of the cut set to a copy of $\theta_{12}$. The number of such vertices are equal to $T(v,x)-1$, which is the number of all triangles incident to the edge $(v,x)$ except $t$.

The rest of the equations, either have a vertex or an edge as the cut set, or are computed directly, such as $\theta_2$, which are easy to follow.
\end{proof}

{\bf Getting edge orbit counts of 4-vertex subgraphs:} There are eleven
edge orbits for 4-vertex subgraphs as shown in \Fig{edge_orbits}. For an edge $(u,v)$, we use $E_i((u,v))$ to denote the count of the $i$th edge orbit (where $i$ is from \Fig{edge_orbits}).

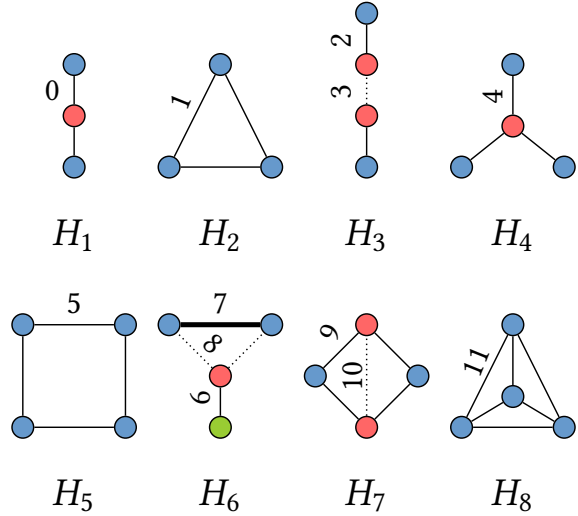
\begin{figure}[t]
\centering
\resizebox{0.45\textwidth}{!}{
\begin{tikzpicture}[nd/.style={scale=1,circle,draw,inner sep=2pt},minimum size = 6pt]    
    \matrix[column sep=0.2cm, row sep=0.4cm,ampersand replacement=\&]
    {
    \HoneEdge \&
    \HtwoEdge \&
    \HthreeEdge \&
    \HfourEdge \\
   	\HfiveEdge \&
   	\HsixEdge \&
   	\HsevenEdge \&
   	\HeightEdge \\
 };
  \end{tikzpicture}
}
\caption{\label{fig:edge_orbits} \small All edge orbits of 4-vertex patterns. Within each pattern, edges of the same line style form an edge orbit.}
\end{figure}

\begin{definition}
Given an edge $e=(v,u)$ in graph $G$ and edge orbit $i$ which lies in pattern $H$, a match of edge orbit $i$ involving edge $(v,u)$, is a non-induced copy of $H$ in $G$ such that $e$ is mapped to an edge in edge orbit $i$.

Let the vertex orbits of the two end points of edge orbit $i$, be $\theta_a=(H, S_a)$ and $\theta_b=(H,S_b)$ where $a \geq b$. From the definition of automorphism, it is clear that a match of edge orbit $i$, involving edge $e=(v,u)$, maps $v$ to a vertex in $S_a$ and $u$ to a vertex in $S_b$, or vice versa. Similar to vertex orbits, we call two matches of an edge orbits equivalent if one can be obtained from the other by applying an automorphism. We use $E_i(\langle v, u \rangle)$ to denote the number of distinct matches of edge orbit $i$ involving $e=(v,u)$, where $v$ is mapped to a vertex in $S_a$ and $u$ is mapped to a vertex in $S_b$. If $a=b$, then $E_i(\langle v, u \rangle)=E_i(\langle u, v \rangle)$, thus we use $E_i((v, u))$ to denote the number of distinct matches of orbit $i$ involving $e$.

\end{definition}

Edge orbit counts of patterns with up to 4 vertex can be computed using the equations presented in \Lem{4edgeorbit}.

\begin{lemma}\label{lem:4edgeorbit}
Let $\lambda_3 = r(\theta_3)$. For each edge $(u,v) \in E(G)$,

$E_{0}(\langle u,v \rangle) = d(u) - 1$

$E_{1}((u,v))  = T(u,v)$

$E_{2}(\langle u,v \rangle)  = \sum_{x \in N(u) \setminus v} \left[d(x)-1 \right] - E_1(u,v)$

$E_{3}((u,v)) = (d(u)-1)(d(v)-1)-E_1(u,v)$

$E_{4}(\langle u,v \rangle) = \binom{d(u)-1}{2}$

$E_{5}((u,v)) = C_4(u,v)$

$E_{6}(\langle u,v \rangle) = \uoMatchSize(u, \lambda_{3}) - E_1(u,v)$

$E_{7}((u,v)) = \sum_{t=(u,v,x)} d(x) -2$

$E_{8}(\langle u,v \rangle) = E_1(u,v)(d(u) - 2)$

$E_{9}(\langle u,v \rangle) = \sum_{t=(u,v,x)} E_1(u,x)-1$

$E_{10}((u,v)) = \binom{T(u,v)}{2}$

$E_{11}((u,v)) = K_4(u,v)$
\end{lemma}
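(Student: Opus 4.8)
The plan is to prove \Lem{4edgeorbit} by the same method used for \Lem{4vertexorbit}: for a fixed edge $e=(u,v)$ and each edge orbit $i$ lying in a pattern $H$, treat $e$ (together, when needed, with a triangle carrying it) as a cut set of $H$, count the ways of extending this core to a full non-induced copy of $H$, and subtract the degenerate extensions that actually realize a smaller pattern — the edge-analogue of the shrinkage term in \Lem{cut-orb}. A simplification used throughout is that all copies here are \emph{non-induced}, so the only way an extension degenerates is by a coincidence of two vertices that were meant to be distinct; spurious extra edges among the chosen vertices never matter. Combined with the orbit-role bookkeeping of \Def{edge-orbit} (which endpoint of $e$ maps to which vertex orbit of $H$), this makes each of the eleven verifications short.

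First I would dispatch the direct cases. By definition of the single edge orbit of the triangle, of $C_4$, and of $K_4$, we get $E_1(e)=T(u,v)$, $E_5(e)=C_4(u,v)$, $E_{11}(e)=K_4(u,v)$, all supplied by the preprocessing of~\cite{PiSeVi17}. For the wedge edge with $u$ its centre, a match is a choice of third vertex in $N(u)\setminus\{v\}$, so $E_0(\langle u,v\rangle)=d(u)-1$; for the claw with $u$ its centre and $v$ a leaf, a match is an unordered pair of remaining leaves in $N(u)\setminus\{v\}$, so $E_4(\langle u,v\rangle)=\binom{d(u)-1}{2}$; and for the diamond's apex edge $(u,v)$, a match is an unordered pair of common neighbours of $u$ and $v$, so $E_{10}(e)=\binom{T(u,v)}{2}$. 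No coincidence is possible in any of these. The single-correction cases are equally routine: for the end edge of the path $v\!-\!u\!-\!x\!-\!y$ one counts $\sum_{x\in N(u)\setminus v}(d(x)-1)$ and subtracts the $T(u,v)$ configurations with $y=v$; for the middle edge $x\!-\!u\!-\!v\!-\!y$ one counts $(d(u)-1)(d(v)-1)$ and subtracts the $T(u,v)$ configurations with $x=y$; and for the pendant edge of the paw with $u$ the attachment vertex, a match is a triangle through $u$ avoiding $v$ plus the pendant $v$, giving $\uoMatchSize(u,\lambda_3)-E_1(u,v)$, with $\uoMatchSize(u,\lambda_3)=T(u)$ from \Lem{4vertexorbit}.

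The remaining three cases go through an enumeration over the triangles incident to $e$, which the preprocessing lists per edge. For the ``opposite'' edge $(u,v)$ of the paw (both endpoints degree $2$), each triangle $t=(u,v,x)$ forces $x$ to be the attachment and contributes one pendant per vertex of $N(x)\setminus\{u,v\}$, giving $\sum_{t=(u,v,x)}(d(x)-2)$ with no correction. For an apex-incident triangle edge of the paw, with $u$ the apex, each $t=(u,v,x)$ contributes one pendant per vertex of $N(u)\setminus\{v,x\}$, giving $E_1(u,v)(d(u)-2)$. The delicate one is $E_9$, the rim edge of the diamond: with $u$ on the apex edge and $v$ a degree-$2$ vertex, a match consists of the triangle $t=(u,v,x)$ through $e$ (so $x$ is the other apex vertex) together with a second degree-$2$ vertex $w$ that is a common neighbour of $u$ and $x$ but is not $v$ — and $v$ is always one such common neighbour — so the count is $\sum_{t=(u,v,x)}(E_1(u,x)-1)$. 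I expect this last verification to be the main obstacle: one must check that the apex pair $\{u,x\}$ and the degree-$2$ pair $\{v,w\}$ are unambiguously recoverable from the match, so that no copy is counted twice across different triangles $t$ and every rim-edge copy is produced exactly once. That is precisely the multiplicity/shrinkage argument behind \Lem{cut-orb}, specialised to a single edge.

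Finally I would check the running-time budget. Every right-hand side is an $O(1)$ arithmetic expression in quantities — $T(v)$, $T(u,v)$, $C_4(u,v)$, $K_4(u,v)$, $\uoMatchSize(u,\lambda_3)$, and the per-edge triangle lists — produced within $O(W(G)+\cc(G)+m+n)$ by~\cite{PiSeVi17} and \Lem{4vertexorbit}; the degree sums $\sum_{x\in N(u)}(d(x)-1)$ cost $O(m)$ over all $u$; and the sums over triangles incident to an edge cost $O(\sum_{(u,v)\in E} T(u,v))=O(\#\text{triangles})=O(W(G))$ in total, each term an $O(1)$ table lookup. Together with \Lem{4vertexorbit} this also yields \Thm{4orbit}. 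The crux, then, is not the time bound but getting the degenerate-case corrections and orbit-role assignments exactly right; for non-induced copies these are all mild, with $E_9$ (and to a lesser extent $E_2,E_3$) being the ones that genuinely need the shrinkage reasoning.
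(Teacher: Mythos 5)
Your proposal is correct and follows essentially the same route as the paper's (very terse) proof: direct counting for the simple orbits, vertex/edge/triangle cut sets with inclusion--exclusion corrections for the rest, and a per-edge triangle enumeration for $E_7$ and $E_9$, all within the stated time budget. Your detailed case-by-case verifications (including the role assignments $\langle u,v\rangle$ and the degenerate-coincidence corrections for $E_2$, $E_3$, $E_6$, $E_9$) are accurate and simply spell out what the paper leaves implicit.
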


\begin{proof}
For $E_7$ and $E_9$, we need to enumerate the triangles incident to $(u,v)$. This could be obtained by the algorithm presented in~\cite{PiSeVi17} in time $O(W(G)+m+n)$ for all edges. The rest of the edge orbits are either computed directly or have a vertex or an edge as a cut set and are easy to follow.
\end{proof}

Finally, we can prove \Thm{4orbit}.

\begin{proof}[Proof of Theorem \Thm{4orbit}]
All vertex and edge orbits of patterns with up to 4-vertices could be obtained from equations in \Lem{4vertexorbit} and \Lem{4edgeorbit}. For all vertices $v$, all edges $e$, and all triangles $t$, we can get $T(v)$, $T(e)$, $C_4(v)$, $C_4(e)$, $K_4(v)$, $K_4(e)$, $K_4(t)$, and also for all edges $e$ we can obtain the list of triangles incident to $e$ in $O(W(G) + \cc(G) + m + n)$~\cite{PiSeVi17}. Assuming we have these counts, the rest of the vertex and edge orbit counts are either computed directly, or use a vertex, edge, or a triangle a cut set. Therefore, we can obtain all the other orbit counts for 4-vertex patterns in $O(W(G) + m + n)$ extra time. Overall, it takes $O(W(G) + \cc(G) + m + n)$ time to get all vertex and edge orbit counts of 4-vertex  patterns.
\end{proof}

{\bf Getting 5-VOCs:} We demonstrate the main ideas through a number of examples.
\begin{asparaitem}
    \item Orbit 26: The pattern cutting framework gives \Eqn{orb26}. We can precompute
    and store degrees at all vertices. During an enumeration of all triangles, 
    one can compute the summand for each triangle. The triangles can be enumerated
    in $O(W(G))$ time (indeed, it can be done even faster using orientations).
    Orbit 13 belongs to a 4-vertex pattern, so $\uoMatchSize(2,\theta_{13})$
    is obtained from \Thm{4orbit}.
    \item Orbit 37: let $\lambda_{37}=r(\theta_{37})$ and $\lambda_{12}=r(\theta_{12})$, then
\begin{align}
\uoMatchSize(u, \lambda_{37}) = \sum_{v \in N(u)} \left[ E_5((u,v))(d(v) - 2) \right] - 2\uoMatchSize(u,\lambda_{12}) \label{eq:orb37}.
\end{align}
    After storing $E_5$-values on each edge, one can get this VOC by
    a triangle enumeration. Orbit 12 belongs to a 4-vertex pattern.
    \item Orbit 68:
\begin{align}
\uoMatchSize(u, \theta_{68}) & = \sum_{\substack{v,w \text{ where} \\ \langle u,v,w \rangle \text{ is a wedge}}} {D(u,v,w) \choose 2} \label{eq:orb68}
\end{align}
    This is a challenging orbit to count. The value $D(u,v,w)$ is the number of diamonds ($H_{7}$)
    that involves the vertices $u,v,w$. It is too expensive to precompute and store all these values,
    but we can do it piecemeal. With knowledge of triangles, we can enumerate all diamonds
    involving a fixed vertex $u$. This can be used to find all the relevant values. Overall,
    the total time is a diamond enumeration and a wedge enumeration.
\end{asparaitem}

Overall, this technique can analogously handle all orbits, barring the 5-cycle and 5-clique (each
of which as a single orbit). The 5-clique can be directly enumerated in time
$O(\dbp(G^\rightarrow))$, a consequence of the classic Chiba-Nishizeki algorithm~\cite{ChNi85}
and explicitly proven in~\cite{PiSeVi17}.

{\bf Dealing with 5-cycles:} This is a special case, and handled in the following theorem.
This is a significant strengthening of 5-cycle counter in ESCAPE, which only gave a global
count in the same running time.

\begin{theorem} \label{thm:5-Cycle-Clique_running-time}
Vertex orbit counts for the 5-cycle can be computed in time $O(W(G) + \dpath(G^\rightarrow) + m + n)$.
\end{theorem}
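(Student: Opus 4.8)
The $5$-cycle $H_{15}$ is vertex-transitive, so it has a single orbit $\theta_{34}$, and for each $v\in V$ we must output $\uoMatchSize(v,\theta_{34})$, the number of $5$-cycles through $v$. My plan is to first run the cutting framework of \Lem{cut-orb} to get a formula, and then give a dedicated enumeration showing that formula is evaluable within the stated budget. Take the cut set $C$ to be a non-adjacent pair of the five vertices, one of them the representative vertex $i$. Then $H|_C$ is an independent set of size two and $\frag_C(H)=\{F_1,F_2\}$, where $F_1$ is a wedge and $F_2$ a $3$-path, sharing exactly the two cut vertices; \Def{shrinkage} leaves the tailed triangle $H_6$ as the only shrinkage of $C_5$ under $C$, and a short count shows the representative vertex lands on orbits $9$ and $10$ (the pendant and base orbits of $H_6$), with $\numshrink$ values $2$ and $1$. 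Writing $\mathrm{SP}(v):=\sum_{x}|N(v)\cap N(x)|\cdot P_3(v,x)$, with $P_3(v,x)$ the number of $3$-edge (hence $4$-vertex) paths between $v$ and $x$, \Lem{cut-orb} and \Cor{orbit} (using $\sz(\theta_{34})=5$ and $|\mathrm{Aut}(C_5)|=10$) give, for every $v$,
\[
2\,\uoMatchSize(v,\theta_{34})\;=\;\mathrm{SP}(v)\;-\;2\,\uoMatchSize(v,\theta_{9})\;-\;\uoMatchSize(v,\theta_{10}).
\]
One can sanity-check the two coefficients on $G=C_5$ and $G=K_5$. Orbits $9$ and $10$ belong to a $4$-vertex pattern and, by \Lem{4vertexorbit}, are expressible through the triangle counts $T(v),T(u,v)$ alone, hence computable in $O(W(G)+m+n)$ --- in particular with no $\cc(G)$ term. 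So it remains to compute $\mathrm{SP}(v)$ for all $v$ in time $O(W(G)+\dpath(G^\rightarrow)+m+n)$.

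The quantity $\mathrm{SP}(v)$ involves a sum over all vertices $x$, so it cannot be computed term by term; instead I would reorganize it as an enumeration over small structures, reusing ESCAPE's total $5$-cycle enumeration (which already runs in $O(W(G)+\dpath(G^\rightarrow)+m+n)$) and localizing it to vertices. Up to degenerate configurations --- those in which the wedge-extension of $F_1$ collides with the $3$-path-extension of $F_2$, leaving only four distinct vertices, again a tailed triangle (no diamond) --- $\mathrm{SP}(v)$ counts each $5$-cycle through $v$ a fixed number of times. ESCAPE assembles the total count as a sum over configurations that glue a directed $3$-path (the pattern in \Fig{5vertex-basis}, whose enumeration costs $O(\dpath(G^\rightarrow))$) to a completing wedge (enumeration cost $O(W(G))$). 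To make this per-vertex I would precompute, in one pass each, two hash tables: $\widehat W[\{a,b\}]=|N(a)\cap N(b)|$, built from the wedge enumeration ($O(W(G))$ nonzero entries), and $\widehat D[\{a,b\}]=$ the number of directed $3$-paths whose two degree-one endpoints are $a$ and $b$, built from the directed-$3$-path enumeration ($O(\dpath(G^\rightarrow))$ nonzero entries). Then one further pass over the wedges uses $\widehat D$-lookups to credit the centre vertex of each completed configuration, and one further pass over the directed $3$-paths uses $\widehat W$-lookups to credit the four path-vertices of each configuration; tracking the multiplicities so that each $5$-cycle deposits the right total into each of its five vertices' accumulators, subtracting the degenerate-configuration corrections (which are $4$-vertex orbit counts from \Thm{4orbit}), and dividing by $2$, yields $\uoMatchSize(v,\theta_{34})$.

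For the running time, the wedge enumeration and its two passes cost $O(W(G)+m+n)$, the directed-$3$-path enumeration and its two passes cost $O(\dpath(G^\rightarrow)+m+n)$, and $\uoMatchSize(\cdot,\theta_9)$, $\uoMatchSize(\cdot,\theta_{10})$ together with the degenerate tailed-triangle corrections cost $O(W(G)+m+n)$ by \Thm{4orbit}, using only triangle and wedge counts (so a $\cc(G)$ term never enters). Hash-table operations are $O(1)$ expected, or worst-case after a radix sort. Summing gives $O(W(G)+\dpath(G^\rightarrow)+m+n)$, as claimed.

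I expect the main obstacle to be this exact per-vertex accounting rather than any single bound. Because ESCAPE's total count is built from products of precomputed aggregates and never visits a $5$-cycle as a unit, the orbit version additionally needs: (i) deciding, for each aggregate, which of the five vertices of a configuration it should be credited to; (ii) a case analysis of how the degree orientation orients the five edges of a pentagon, so that the directed-$3$-path pattern of \Fig{5vertex-basis} provably generates each $5$-cycle with a fixed multiplicity --- which is what makes the final division valid vertex by vertex and not merely in the aggregate; and (iii) isolating every lower-order contribution (the shrinkage terms of \Lem{cut-orb} and the degenerate configurations) as $4$-vertex orbit counts, so they can be offloaded to \Thm{4orbit} without ever introducing a $\cc(G)$ term. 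This is precisely the automorphism-sensitive bookkeeping that distinguishes the orbit setting from ESCAPE's total-count setting.
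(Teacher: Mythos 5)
Your cut-set identity in the first half is correct: taking $C$ to be a non-adjacent pair of the $C_5$ does give $2\,\uoMatchSize(v,\theta_{34})=\sum_x W(v,x)\,P_3(v,x)-2\,\uoMatchSize(v,\theta_{9})-\uoMatchSize(v,\theta_{10})$, and the coefficients $2$ and $1$ check out (e.g.\ on $K_4$, $K_5$, and $C_5$ plus a chord). But the theorem is a statement about achieving the running time, and that is where the proposal has a genuine gap. The quantity $\sum_x W(v,x)P_3(v,x)$ cannot be evaluated within the budget as you set it up: your table $\widehat{D}$ stores counts of \emph{directed} 3-paths of one orientation class, which is not $P_3(v,x)$, and the total number of undirected 3-paths, $\sum_{(a,b)\in E}(d(a)-1)(d(b)-1)$, is not bounded by $W(G)+\dpath(G^\rightarrow)+m+n$, so neither term-by-term evaluation nor an orientation-blind aggregate trick works. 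You then pivot to ``localizing ESCAPE's enumeration,'' which is indeed the paper's route, but the step you defer as obstacle (ii) is the entire proof there. Under the degree orientation the 5-cycle has three possible DAGs (\Fig{5cycle}), and they do \emph{not} contain the directed 3-path pattern with a fixed multiplicity: DAG (b) contains two such directed 3-paths while (a) and (c) contain one. Hence crediting vertices by pairing every directed 3-path with every wedge between its endpoints (your $\widehat{W}[\{a,b\}]=|N(a)\cap N(b)|$ is orientation-blind) counts type-(b) 5-cycles twice and the others once, and no single final division repairs this per vertex. The paper's proof hinges on the refinement that each 5-cycle DAG contains \emph{exactly one} directed 3-path whose complementary wedge is not an in-in wedge; consequently the wedge information must be kept by directed type, and the second pass that credits the wedge-center $w$ with $P(i,l)$ must be restricted to in-out and out-out wedges. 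None of this case analysis appears in your plan.

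Relatedly, your two halves use incompatible corrections. The shrinkage terms $-2\,\uoMatchSize(v,\theta_9)-\uoMatchSize(v,\theta_{10})$ and the final division by $2$ belong to the cut-framework identity; once you switch to the directed enumeration, the degenerate configurations are different ones --- an edge between $i$ and $k$ or between $j$ and $l$ when crediting the four path vertices, and the three directed tailed triangles of \Fig{TaileTri} when crediting $w$ --- and they must be subtracted per configuration (using per-edge triangle and tailed-triangle counts), not as a blanket per-vertex 4-vertex orbit count offloaded to \Thm{4orbit}. You flag items (i)--(iii) as expected obstacles, but they are not finishing touches: the orientation case analysis, the per-role crediting with known multiplicity, and the directed degenerate-case corrections \emph{are} the paper's proof of this theorem, and without them the proposal does not establish the stated bound. (Your identity is a nice alternative starting point, but to use it you would still need to compute $\sum_x W(v,x)P_3(v,x)$ for all $v$ in the stated time, which forces essentially the same orientation analysis.)
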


\begin{proof} As shown in \Fig{5cycle}, there are three different 5-cycles DAGs up to isomorphism. 
Each 5-cycle has exactly one directed 3-path as shown in~\Fig{5cycle}, such that the remaining wedge is not an in-in wedge. In the figure, this directed 3-path is labeled $i, j, k, l$, and $w$ is the center vertex of the wedge. By a directed wedge enumeration,
we can precompute the number of such wedges between all pairs of vertices.
We enumerate over the directed 3-paths: for every directed 3-path we get between vertices
$i$ and $l$, we already know the number of relevant directed wedges between $i$ and $l$ as shown in~\Fig{5cycle}. This allows us to increment the orbit counts for the vertices $i,j,k,l$, by the number of wedges. Notice that an edge between $i$ and $k$ or between $j$ and $l$ could result in such a directed wedge. We can check the existence of these two edges using hashed edges of $G$. For each such edge, we should decrement the orbit counts for the vertices $i,j,k,l$ by one.

This process does not update the orbit count for vertex $w$. Let $P(i,l)$ be the
number of directed 3-paths from $i$ to $l$ as shown in~\Fig{5cycle}. To compute the orbit counts for vertex $w$, we enumerate in-out and out-out wedges between $i$ and $l$, and add $P(i,l)$ to the 5-cycle orbit count of vertex $w$. Notice that the 3-paths (corresponding to $P(i,l)$) potentially intersect with the wedge under consideration. Any such intersection would result in couting a tailed triangle instead of a 5-cycle. We need to subtract out the count of these tailed-triangles.
Any in-out wedge from $i$ to $l$ corresponds to a 5-cycle of type (c), in which case we count correctly. An in-out wedge from $l$ to $i$ corresponds to a 5-cycle of type (b); in this case we will count each tailed triangle of type (1), shown in \Fig{TaileTri}, as a 5-cycle while passing over $(l,j,i)$ wedge. An out-out wedge between $i$ and $l$ corresponds to a 5-cycle type (a); in this case we will count each tailed triangle of type (2) as a 5-cycle while passing over $(i,j,l)$, and count each tailed triangle of type (3) while passing over $(i,k,l)$.

We can easily get the tailed triangle counts corresponding to each wedge using the per-edge tailed triangle counts that we already have. All in all, we can get VOCs for the 5-cycle in the stated time.
\end{proof}

\begin{figure}[t]
\centering
\resizebox{.45\textwidth}{!}{
\begin{tikzpicture}[nd/.style={scale=1,circle,draw,fill=bluegray,inner sep=2pt},minimum size = 13pt]    
    \matrix[column sep=0.3cm, row sep=0.4cm,ampersand replacement=\&]
    {
     \DirFiveCycleOne \&
     \DirFiveCycleTwo \&
     \DirFiveCycleThree \&
     \DirThreePathForFiveCycle \\
 };
  \end{tikzpicture}
  }
\caption{\small All different 5-cycle DAGs up to isomorphism. There is only one directed 3-path as shown on the right side in each 5-cycle DAG where the remaining wedge is not an in-in wedge.}
\label{fig:5cycle}
\end{figure}
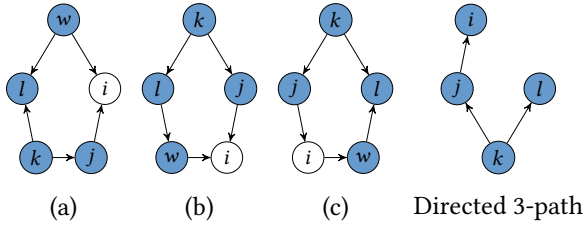

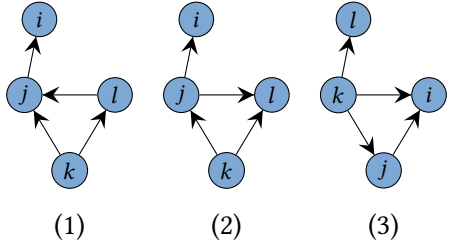
\begin{figure}[t]
\centering
\resizebox{.35\textwidth}{!}{
      \begin{tikzpicture}[nd/.style={scale=1,circle,draw,fill=bluegray!85,inner sep=2pt},minimum size = 13pt]    
    \matrix[column sep=0.4cm, row sep=0.2cm,ampersand replacement=\&]
    {
     \DirTailedTriOne \&
     \DirTailedTriTwo \&
     \DirTailedTriThree \\
 };
  \end{tikzpicture}
  }
\caption{Directed tailed triangles counted while counting 5-cycles}
\label{fig:TaileTri}
\end{figure}

\section{Experimental Results} 
We implement EVOKE in {\tt C++}.
We ran experiments on a commodity machine from AWS EC2: R5d.2xlarge, which has
Intel Xeon Platinum 8175M CPU @ 2.50GHz with 4 cores and 1024K L2 cache (per core), 34MB L3 cache, and 64GB memory.
For running EVOKE on the {\tt com-orkut} graph (117M edges), 
we used the more powerful R5d.12xlarge EC2 instance (with 384GB RAM).
We actually run ORCA for 5-vertex patterns on the larger machine for any instance
with more than 1M edges. 
The EVOKE package is available at~\cite{evoc} as open source code.

We used large graph datasets from the Network Repository~\cite{nr}, SNAP~\cite{snapnets}, and Citation Network Dataset~\cite{CitationNetwork, tang2008arnetminer}. We removed directions from edges, and omitted duplicates and self loops. \Tab{properties} includes the number of nodes, edges, and triangles for all the graphs we used. We also run EVOKE on {\tt wiki-en-cat}, a bipartite graph from the KONECT network repository~\cite{konect:2016:wiki-en-cat, download.wikimedia.org, konect}.

As mentioned earlier, we compare our results with ORCA~\cite{hovcevar2016computation} which is the state of the art algorithm for computing all 5-VOCs. The runtimes of ESCAPE, EVOKE, and ORCA is given in \Tab{properties}. We also state the time for just counting 4-VOCs. When we do not report
a time for ORCA, it implies that either ORCA ran out of memory or ran more than 1000 times
the EVOKE running time.
In all the results, the time includes the I/O, so we account for the time required
to print the (large) output into files.
As mentioned later, there is a parallel implementation of EVOKE, but all run times reported are of the sequential implementation
(to have a fair comparison with ORCA).

{\bf Running time of EVOKE:} As seen in \Tab{properties}, for many instances
of counting 5-VOCs, we simply cannot get results with ORCA. 
For all graphs larger than {\tt web-google-dir}, ORCA-5 runs out of memory even on the more powerful EC2 instance, or was stopped after a thousand times the corresponding EVOKE running time has passed (shown by blue bars in \Fig{speedup}).
When ORCA does give results, the speedup of EVOKE is easily in the orders of hundreds. \Fig{speedup}
gives the speedup as a chart. EVOKE makes 5-VOCs computation feasible, for graphs
with tens of millions of edges. ORCA is unable to process any graph in that size
range. Even for the large {\tt com-orkut} graph
with over 100M edges, EVOKE gets all counts in two days. 

As an aside, for counting 4-VOCs, EVOKE runs
typically in minutes, consistent with previous work~\cite{PiSeVi17,ortmann2017efficient}.

\begin{table*}[t]
\scriptsize
\caption{\small Properties of the graphs and runtime of ESCAPE, EVOKE, and ORCA} 
\begin{center}
\begin{tabular}{l|rrr|rrrrrr|}
\multicolumn{1}{c|}{} & \multicolumn{3}{c|}{} & \multicolumn{6}{c|}{Runtimes in seconds}\\
Dataset (sorted by increasing $|E|$) & 	$|V|$	& $|E|$	&  $|T|$ & ESC-4 & EVOKE-4 & ORCA-4 & ESC-5 & EVOKE-5 & ORCA-5  \\\hline 
soc-brightkite & 56.7K &  213K	& 494K & 0.43 & 0.59 & 1.77  & 4.69 & 7.74  & 562.84 \\
ia-email-EU-dir & 265K & 364K & 267K & 0.49 & 1.29 & 9.38 & 5.91 & 13.18  & 17.36K\\
tech-RL-caida & 191K & 607K & 455K & 0.68 & 1.29 & 2.99 & 4.65 & 10.03 & 595.44\\
Citation-network V1 & 2.17K & 631K & 248K & 0.69 & 2.57 & 42.91 & 2.89 & 8.93 & 275.15\\
ca-coauthors-dblp & 540K & 1.52M & 444M & 266.81 & 287.89 & 510.77 & 20.69K & 26.91K & 171.32K \\
DBLP-Citation-network V5 & 470K & 2.08M & 1.38M & 2.59 & 10.18 & 13.04 & 19.17 & 40.76 & 2.92K\\
Citation-network V2 & 660K & 3.02M & 1.9M & 4.11 & 11.57 & 28.42 & 32.78 & 69.36 & 7.52K\\
wiki-en-cat & 2.04M & 3.8M & 0 & 3.13 & 12.61 & 114.31 & 22.85 & 86.58 & - \\ 
web-google-dir & 876K & 4.32M & 13.4M & 4.76 & 10.03 & 45.40 & 45.86 & 104.88 & 76.37K\\	
web-wiki-ch-internal & 1.93M & 8.95M & 18.19M & 30.11 & 65.45 & 655.15 & 1.22K & 1.87K & - \\
tech-as-skitter & 1.69M & 11.1M & 28.8M & 28.91 & 68.25 & 827.46 & 853.21 & 1.46K & -\\
web-hudong & 1.98M & 14.43M & 21.61M & 48.20 & 85.83 & 1.78K & 2.41K & 3.45K & - \\
web-baidu-baike & 2.14M & 17.01M & 25.2M & 61.4 & 148.11 & 2.92K & 2.66K & 4.27K & - \\
tech-ip & 2.25M & 21.64M & 2.3M & 92.03 & 277.87 & 79.96K & 18.14K & 40.57K & - \\
soc-LiveJournal1  & 4.85M & 42.85M & 285.73M & 401.07 & 599.43 & 1.30K & 28.46K & 36.57K & -\\
com-orkut & 3.72M & 117.18M & 627.58M & 1.23K & 2.77K & 7.37K & 137.73K & 143.41K & -\\
\hline
\end{tabular}
\end{center}
\label{tab:properties}
\end{table*}

{\bf Comparison with ESCAPE:} 
\Thm{running-time} shows that the asymptotic upper bound given for ESCAPE in ~\cite{PiSeVi17} is also an asymptotic upper bound for EVOKE run time. 
We are able to validate this in practice. 
\Fig{EVOC_ESC_ratio} shows the ratio of runtime of EVOKE over ESCAPE for 5-vertex patterns. Note that ESCAPE counts subgraphs and EVOKE computes orbit counts for orbits in those subgraphs. 
As we can see in \Fig{EVOC_ESC_ratio}, in all our experiments the ratio is typically below 2 and never
more than 4. We believe this finding to be significant, since obtaining the richer
information of 5-VOCs is just as feasible as getting exact total counts.

{\bf Runtime distribution and parallel speedup:} Typically, a few orbits take the lion's share
of the running time.  \Fig{5OrbitTimeRatio} shows the split-up of running time over the various
orbits. We group them into four classes: the 5-clique, the 5-cycle, the orbits of $H_{25}$ and $H_{27}$
(these require diamond enumerations), and everything else. By and large,
just the 5-cycle and 5-clique orbits account for half the time. 

It is straightforward to parallelize the computation of these different groups
of orbits. For the non-induced setting, these are simply independent computations. We perform
this parallelism, and present the speedup in \Fig{parallel_speedup}. As expected,
there is roughly a 1.5-2 factor speedup, corresponding to the most expensive
orbit to compute.

\begin{figure*}[t]
\centering
\begin{subfigure}[b]{0.33\textwidth}
\centering
\includegraphics[width=\textwidth]{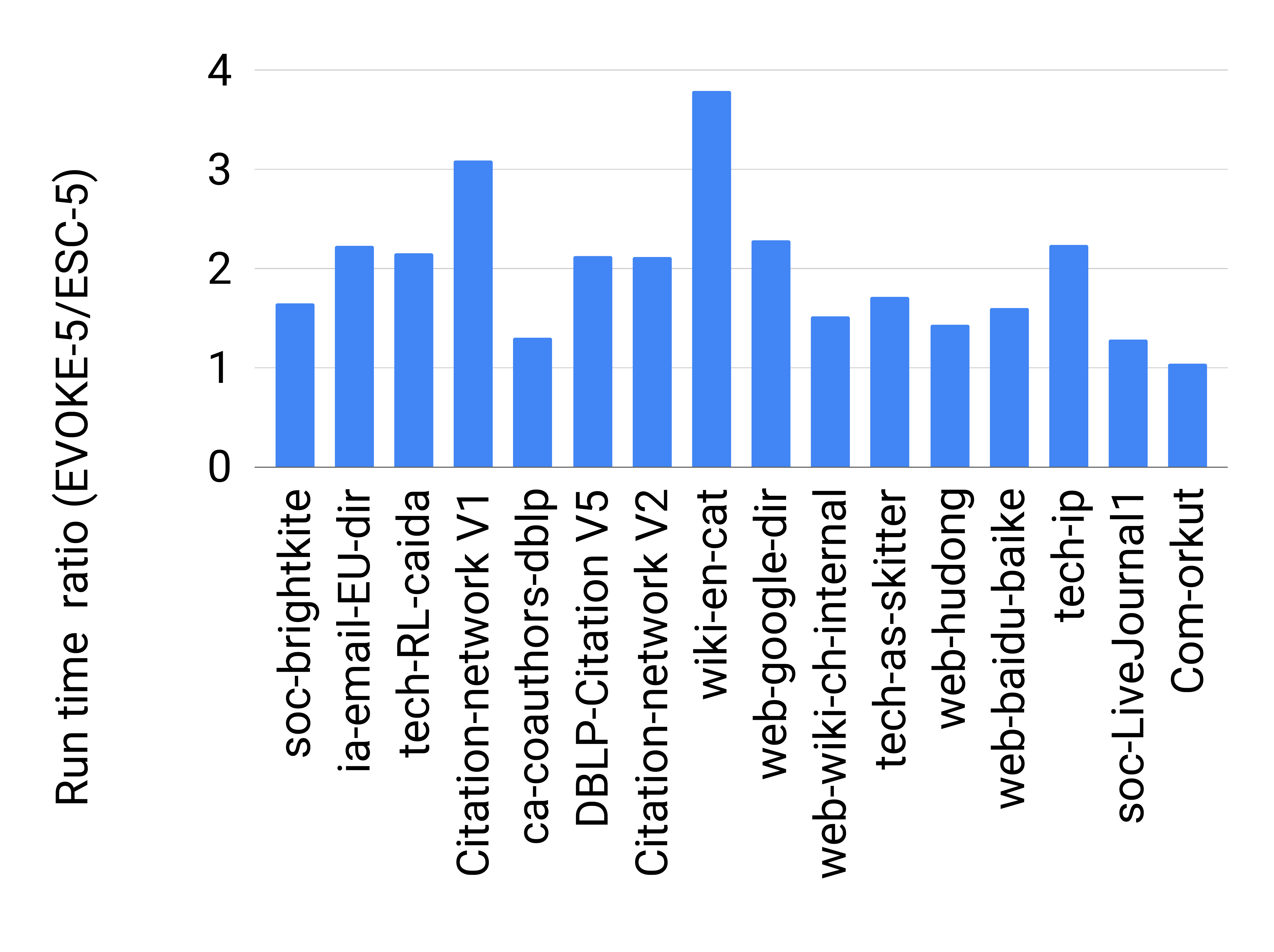}
\caption{Ratio of runtime represented as EVOKE-5/ESC-5 demonstrates \Thm{running-time}}
\label{fig:EVOC_ESC_ratio}
\end{subfigure}
\begin{subfigure}[b]{0.33\textwidth}
\centering
\includegraphics[width=\textwidth]{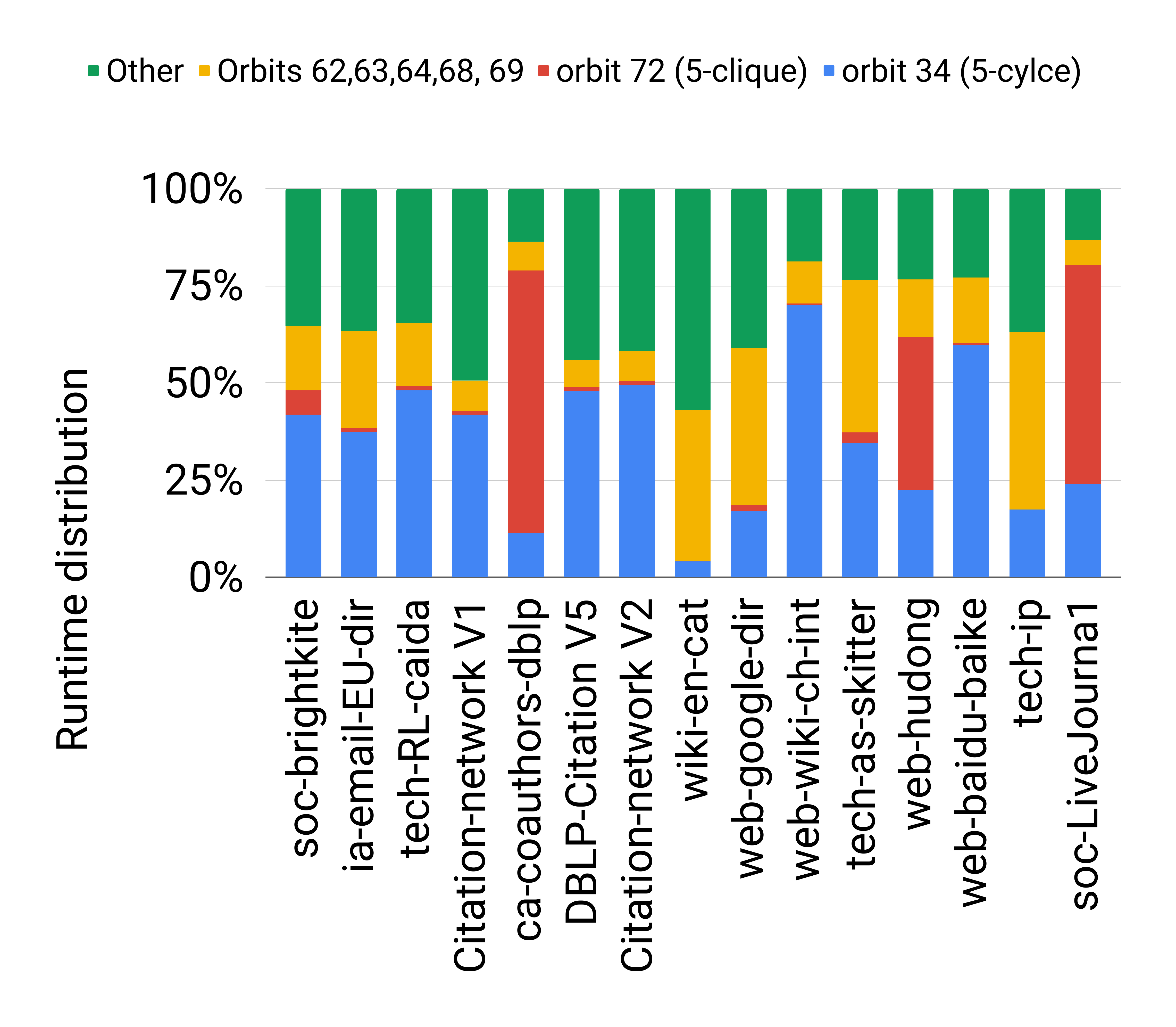}
\caption{Runtime distribution over 5-vertex orbits}
\label{fig:5OrbitTimeRatio}
\end{subfigure}
\begin{subfigure}[b]{0.33\textwidth}
\centering
\includegraphics[width=\textwidth]{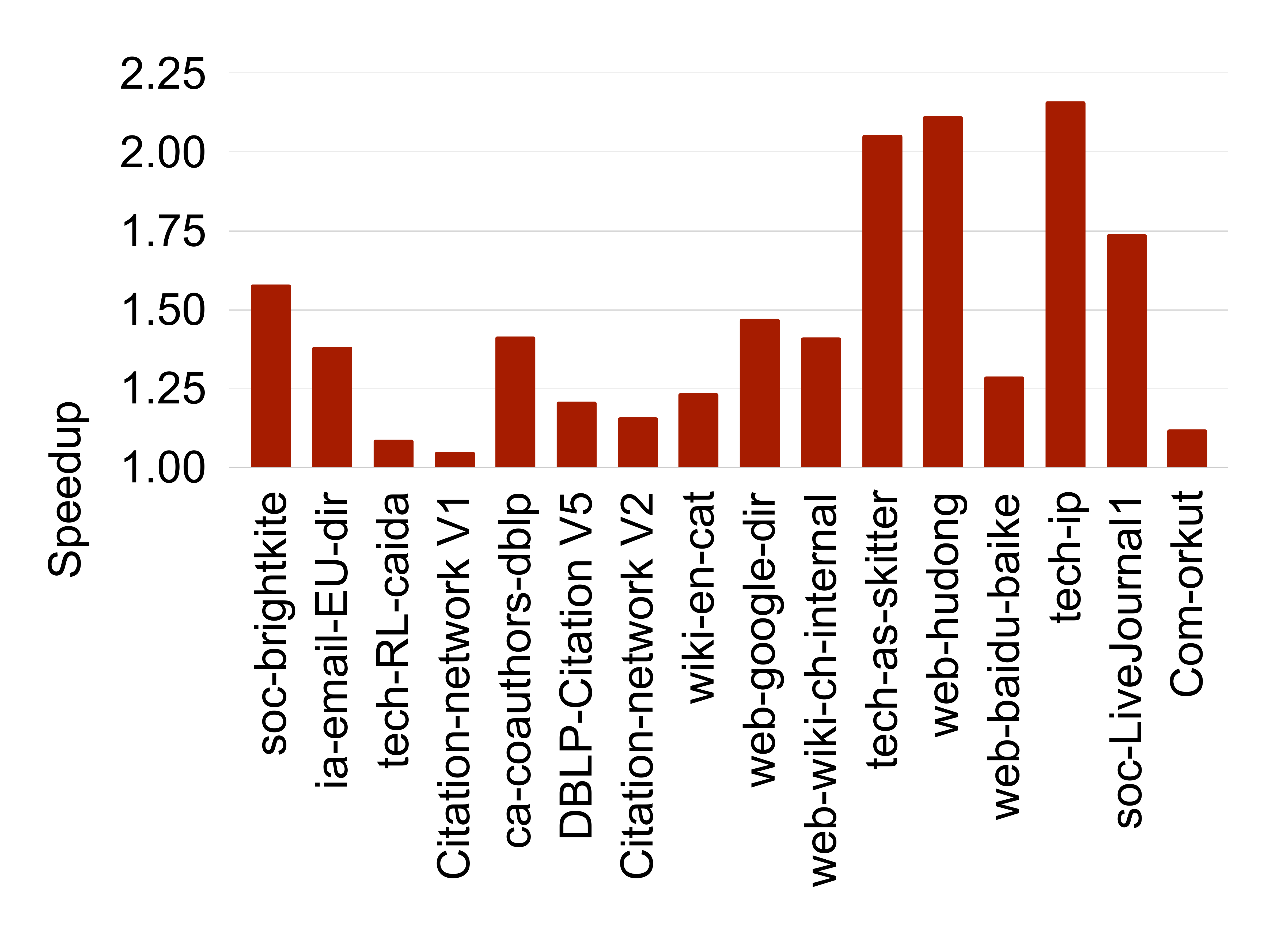}
\caption{Speedup achieved by parallel computation of 5-VOCs}
\label{fig:parallel_speedup}
\end{subfigure}
\caption{Empirical analysis of EVOKE runtime}
\label{fig:runtime_figs}
\end{figure*}

\begin{figure*}[h!]
\centering
\begin{subfigure}[b]{0.33\textwidth}
\centering
\includegraphics[width=\textwidth]{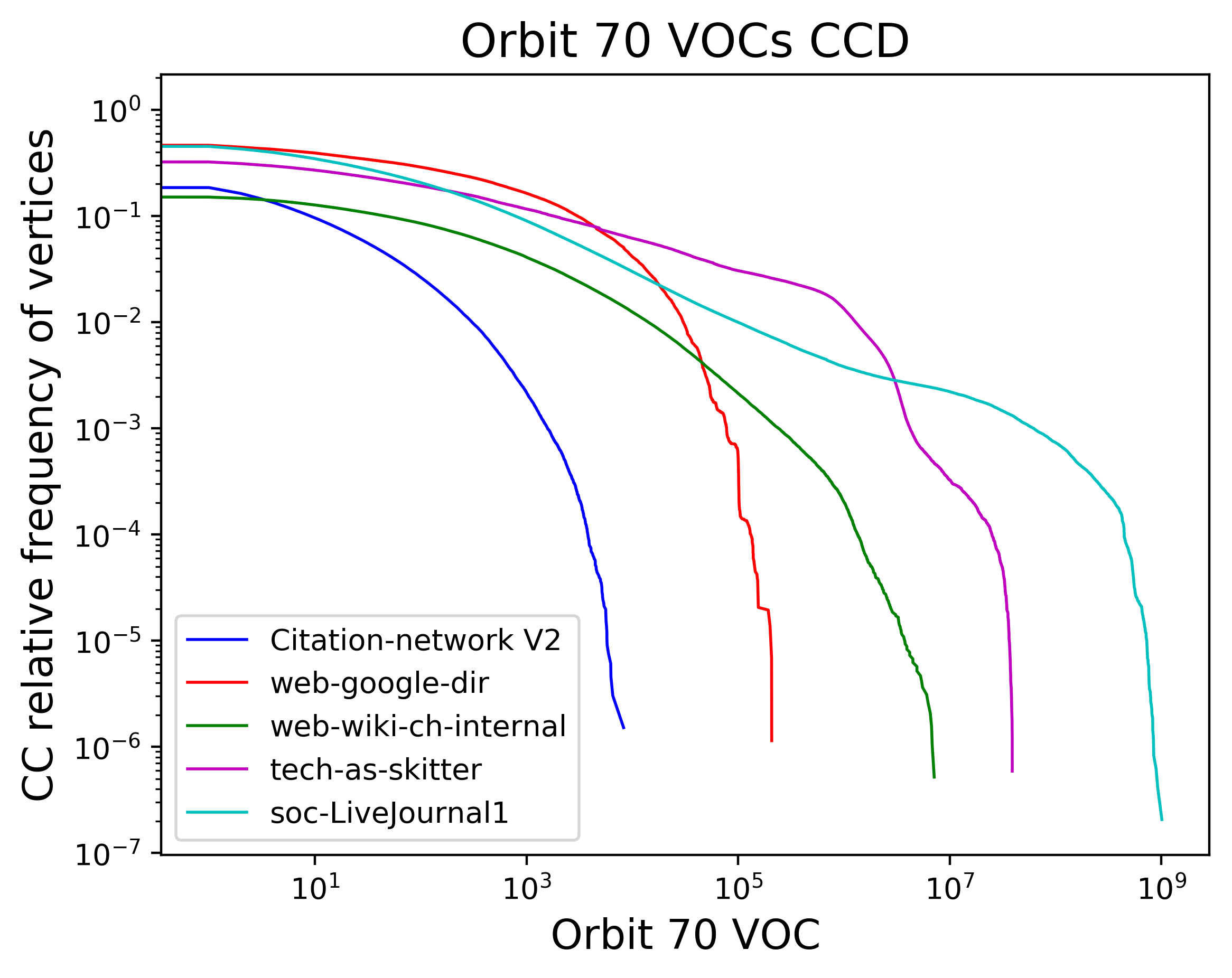}
\caption{Orbit 70 VOCs CCD}
\label{fig:orb70CCD}
\end{subfigure}
\begin{subfigure}[b]{0.33\textwidth}
\centering
\includegraphics[width=\textwidth]{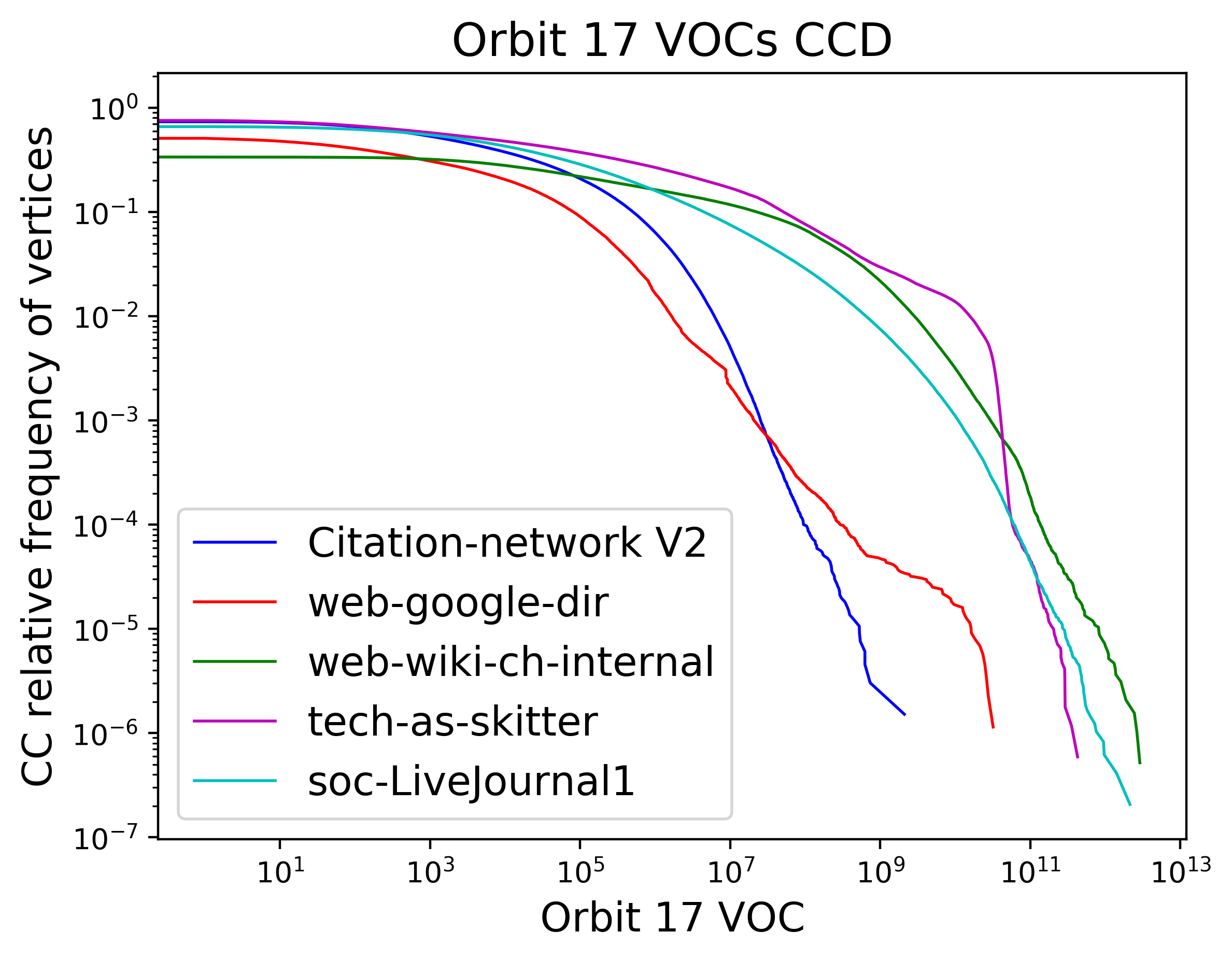}
\caption{Orbit 17 VOCs CCD}
\label{fig:orb17CCD}
\end{subfigure}
\begin{subfigure}[b]{0.33\textwidth}
\centering
\includegraphics[width=\textwidth]{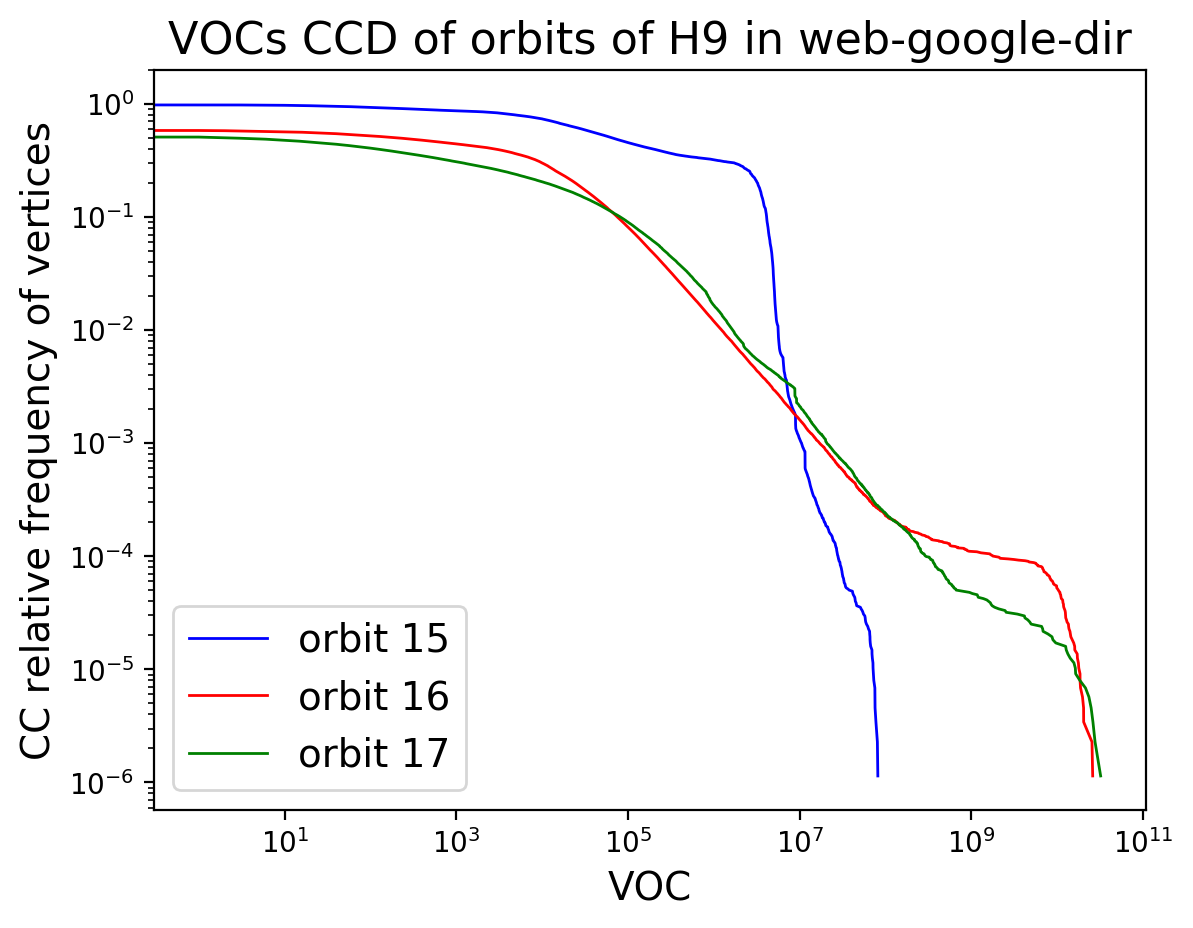}
\caption{VOCs CCD of $\orb(H_9)$}
\label{fig:H9CCD}
\end{subfigure}
\caption{(a), (b): VOCs comp. cum. distribution (CCD) of orbits. 
For count $x$, we plot the fraction of vertices with orbit count at least $x$. (c) 
For {\tt web-google-dir}, we plot the VOC CCD for all orbits of the 4-path. Observe that the distributions for the start/end (orbit 15) and the center (orbit 17) behave differently.}
\label{fig:count_dist_comp}
\end{figure*}

{\bf VOC distributions:} As a demonstration of EVOKE, we plot the VOC distribution
(also called graphlet degree distribution) of various graphs. To get cleaner figures,
we plot the Complementary Cumulative Distribution (CCD): for $x$, we plot the fraction
of vertices whose orbit count is at least $x$. This is plotted for Orbit 70 (in induced 5-clique minus edge) in~\Fig{orb70CCD} and for Orbit 17 (center of induced 4-path) in \Fig{orb17CCD}. We stress that these induced counts are typically
harder to obtain than the non-induced counts.

For Orbit 17, we observe that the largest count is more than trillions, showing the
challenges in exact counting. Also the distribution of {\tt tech-as-skitter} has a bigger dropoff
in the tail, which may be indicative of the path structures in AS networks.
The {\tt web-google-dir} graph has a sharp dropoff at the end as well.
We see that Orbit 70 distributions are quite different over the graphs,
unlike Orbit 17, where the tails are similar for three of the graphs. 
The counts in {\tt Citation-network V2} are much smaller, suggesting there are not many 5-cliques missing edges. 

In \Fig{H9CCD}, for the graph {\tt web-google-dir}, we plot the VOC
of the three different orbits (15-17) of the induced 4-path. Observe
how the distribution for Orbit 15 (the start/end) is significantly different from Orbit 17 (the center),
underscoring the fine-grained information that orbits provide over vanilla counts.

{\bf Graph mining through orbit counts:} As another demonstration, we focus
on the citation network {\tt DBLP-Citation-network V5}, where we have metadata associated with vertices
(papers). We found that the paper with largest count of Orbit 17 (center of induced 4-path)
is the classic book ``C4.5: Programs for Machine Learning'' by Ross Quinlan.
On the other hand, the paper participating in the most 5-cliques is the highly cited
VLDB 94 paper ``Fast Algorithms for Mining Association Rules in Large Databases'' by
Agarwal and Srikant. It is interesting that the orbit counts can immediately give
us semantically significant vertices.

\begin{acks}
We thank Akul Goyal for helpful discussions and his help on verifying the correctness of the output of the EVOKE package.
\end{acks}


\bibliographystyle{acm}
\bibliography{orbit_count}

\appendixpage
\appendix

\section{Conversion between Induced and Noninduced counts}
\label{sec:transform}

Given  $u \in V(G)$ and orbit $\theta$, we used $\uoMatchSize_G(u,\theta)$ to denote the number of distinct non-induced matches of $\theta$ in $G$. Let $\uoIMatchSize_G(u,\theta)$ denote the number of distinct induced orbit counts. Let $\theta_i=(H,S)$, where $H$ has $k$ vertices, where $k \leq 5$. It is easy to see that we can obtain $\uoMatchSize_G(u,\theta_i)$ from the set $\{\uoIMatchSize_G(u,\theta_i),\ldots,\uoIMatchSize_G(u,\theta_j)\}$, where $\theta_j$ is the vertex orbit with the largest index in $k$-vertex patterns.

Consider orbit $\theta_j=(H^\prime, S^\prime)$, where $j \geq i$. Let $v$ be a vertex in $H^\prime$, where $v\in S^\prime$. If we use graph $H^\prime$ as our input graph (instead of $G$), $\uoMatchSize_{H^\prime}(v,\theta_i)$ is actually the number of non-induced matches of $\theta_j$ in $G$ that an induced match of $\theta_j$ include as a subgraph.

If we think of the list of induced and non induced node orbit counts for any vertex $u$ in any graph $G$ as vectors $\uoIMatchSize(u)$ and $\uoMatchSize(u)$, there is a matrix $\bA$ such that $\uoMatchSize(u) = \bA \ \uoIMatchSize(u)$. The matrix for orbits which lie in 4-vertex patterns (orbits $\theta_4$-$\theta_{14}$) is given in \Fig{A}. The matrix for orbits of 5-vertex patterns (orbits $\theta_{15}$-$\theta_{72}$) is too large to be included here, but we made it accessible at~\cite{evoc}. Note that $\bA_{i,j}$ is the number of non-induced matches of $\theta_i$ that an induced match of orbit $\theta_j$ include, for any vertex $u$ in any graph $G$.

Naturally, $\uoIMatchSize(u) = \bA^{-1} \uoMatchSize(u)$, and that is how we get induced counts from non-induced counts. The inverse matrix for vertex orbits $\theta_4$-$\theta_{14}$ is given in~\Fig{Ainv}, the inverse matrix for orbits $\theta_{15}$-$\theta_{72}$ is again too large to be included here, but could be found in~\cite{evoc}.

\section{Getting 5-VOCs}\label{sec:fivevertex}
In this section we provide the formulas for computing 5-VOCs derived from \Lem{cut-orb} and also analysis of run time for computing 5-VOCs using this equations. This will also prove \Thm{running-time}. In the run time analysis for computing orbit $\theta_i$, we assume that we have already obtained the counts for $\theta_0$-$\theta_{i-1}$ and all edge orbit counts $E_0$-$E_{11}$. Most of the equations in \Thm{5vertexorbit} have a vertex, an edge, or a triangle as the cut set and are straightforward to follow. we give a proof sketch for the rest of the equations and how they are obtained from \Lem{cut-orb}.

\begin{theorem} \label{thm:5vertexorbit}
For $i \in {0,\ldots,72}$, let $\lambda_i = r(\theta_i)$. Let $TT(u,v)$ denote the count of tailed triangles incident to edge $(u,v)$, where $u$ is the tail vertex ($\theta_9$) and $v$ is in $\theta_{10}$. The value $\cc(u,v,w)$ denotes the number of diamonds ($H_{7}$) that involves the vertices $u,v,w$ such that $u$ and $w$ are the vertices incident to the chord. Let $\cc(u,v)$ be the number of diamonds where $u$ and $v$ are not incident to the chord. And finally, let $W(u,v)$ be the number of wedges between vertices $u$ and $v$. Then, for each vertex $u \in V$,

\begin{flalign*}
\uoMatchSize(u,\lambda_{15}) &=
 \sum_{v \in N(u)} \left[\uoMatchSize(v, \lambda_{4})\right] \\ &- \uoMatchSize(u, \lambda_{5}) -2 \uoMatchSize(u, \lambda_{11}) - 2\uoMatchSize(u, \lambda_{8}) &
\end{flalign*}

\begin{flalign*}
\uoMatchSize(u, \lambda_{16}) &= \uoMatchSize(u, \lambda_{4})(d(u)-1) \\&-\uoMatchSize(u, \lambda_{10}) - 2 \uoMatchSize(u, \lambda_{8}) & 
\end{flalign*}

\begin{flalign*}
\uoMatchSize(u, \lambda_{17}) &= {\uoMatchSize(u, \lambda_{1}) \choose 2} \\&- \uoMatchSize(u, \lambda_{3}) - \uoMatchSize(u, \lambda_{6}) - \uoMatchSize(u, \lambda_{8}) - \uoMatchSize(u, \lambda_{10}) &
\end{flalign*}

\begin{flalign*}
\uoMatchSize(u, \lambda_{18}) &= \sum_{v \in N(u)} [\uoMatchSize(v, \lambda_{6})]  - 3\uoMatchSize(u, \lambda_{7}) - \uoMatchSize(u, \lambda_{10}) \\ & =\sum_v \left[ W(u,v) {d(v)-1 \choose 2} \right] - \uoMatchSize(u, \lambda_{10})  &   
\end{flalign*}

\begin{flalign*}
\uoMatchSize(u, \lambda_{19}) &= \sum_{v \in N(u)} \left[ \uoMatchSize(v, \lambda_{5}) \right] \\&-\uoMatchSize(u, \lambda_{4})-\uoMatchSize(u, \lambda_{u}) - \uoMatchSize(u, \lambda_{10}) &
\end{flalign*}

\begin{flalign*}
\uoMatchSize(u, \lambda_{20}) &= \sum_{v \in N(u)} \left[ (d(u)-1) {d(v)-1 \choose 2} \right] - 2\uoMatchSize(u, \lambda_{10}) &
\end{flalign*}

\begin{flalign*}
\uoMatchSize(u, \lambda_{21}) &=  \sum_{v \in N(u)} \left[ (d(v)-1) {d(u)-1 \choose 2} \right] - \uoMatchSize(u, \lambda_{11}) &
\end{flalign*}

\begin{flalign*}
\uoMatchSize(u, \lambda_{22}) &=   \sum_{v \in N(u)} {d(v)-1 \choose 3} & 
\end{flalign*}

\begin{flalign*}
\uoMatchSize(u, \lambda_{23}) &=  {d(u) \choose 4} & 
\end{flalign*}

\begin{flalign*}
\uoMatchSize(u, \lambda_{24}) &=  \sum_{v \in N(u)} \left[ \uoMatchSize(v, \lambda_{10}) \right] \\ & - \uoMatchSize(u, \lambda_{10}) -2\uoMatchSize(u, \lambda_{11})-2\uoMatchSize(u, \lambda_{12}) &
\end{flalign*}

\begin{flalign*}
\uoMatchSize(u, \lambda_{25}) &=  \sum_{t=(u,v,x)} \left[ (d(v)-2)(d(x)-2) \right] - \uoMatchSize(u, \lambda_{12}) &
\end{flalign*}

\begin{flalign*}
\uoMatchSize(u, \lambda_{26}) &= \sum_{t=(u,v,x)} \left[ (d(u)-2)( (d(x)-2) + (d(v)-2) ) \right] \\& - 2\uoMatchSize(u, \lambda_{13}) &
\end{flalign*}

\begin{flalign*}
\uoMatchSize(u, \lambda_{27}) &= \sum_{v \in N(u)} \left[\uoMatchSize(v, \lambda_{9})\right] -\uoMatchSize(u, \lambda_{11}) -2\uoMatchSize(u, \lambda_{13})\\ & = \sum_v \left[ W(u,v)\uoMatchSize(v, \lambda_{3}) \right] \\&- \uoMatchSize(u, \lambda_{13}) - \uoMatchSize(u, \lambda_{9}) -\uoMatchSize(u, \lambda_{3}) &
\end{flalign*}

\begin{flalign*}
\uoMatchSize(u, \lambda_{28}) &= \sum_{v \in N(u)} \left[ (d(u)-1) \uoMatchSize(v, \lambda_{3}) \right]  \\& - 2\uoMatchSize(u, \lambda_{3}) - 2\uoMatchSize(u, \lambda_{11}) - 2\uoMatchSize(u, \lambda_{12}) &
\end{flalign*}

\begin{flalign*}
\uoMatchSize(u, \lambda_{29}) &= \sum_{t=(u,v,x)} \left[ \uoMatchSize(v, \lambda_{1}) + \uoMatchSize(x, \lambda_{1}) \right] \\& -4\uoMatchSize(u, \lambda_{3}) - \uoMatchSize(u, \lambda_{10}) - 2\uoMatchSize(u, \lambda_{11}) \\& - 2 \uoMatchSize(u, \lambda_{12}) - 2\uoMatchSize(u, \lambda_{13}) &
\end{flalign*}

\begin{flalign*}
\uoMatchSize(u, \lambda_{30}) &= \sum_{v \in N(u)} \left[  (d(v) - 1 ) \uoMatchSize(u, \lambda_{3}) \right] \\& - 2\uoMatchSize(u, \lambda_{3}) - \uoMatchSize(u, \lambda_{10}) - 2\uoMatchSize(u, \lambda_{13}) &
\end{flalign*}

\begin{flalign*}
\uoMatchSize(u, \lambda_{31}) & = \sum_{v \in N(u)} \left[(d(v) -1)\uoMatchSize(v, \lambda_{3})\right] \\& - 2\uoMatchSize(u, \lambda_{9}) - \uoMatchSize(u, \lambda_{10}) - 2\uoMatchSize(u, \lambda_{3}) &
\end{flalign*}

\begin{flalign*}
\uoMatchSize(u, \lambda_{32}) & = \sum_{t=(u,v,x)} \left[ {d(v)-2 \choose 2} + {d(x)-2 \choose 2}\right] &
\end{flalign*}

\begin{flalign*}
\uoMatchSize(u, \lambda_{33}) & = \uoMatchSize(u, \lambda_{3}) {d(u)-2 \choose 2} &
\end{flalign*}

\begin{flalign*}
\uoMatchSize(u, \lambda_{35}) & = \sum_{v \in N(u)}\left[\uoMatchSize(v, \lambda_{8})\right] - 2R_8(u) - R_{13}(u) &
\end{flalign*}

\begin{flalign*}
\uoMatchSize(u, \lambda_{36}) & = \sum_{v} {W(u,v) \choose 2} (d(v) - 2) - \uoMatchSize(u, \lambda_{13}) & 
\end{flalign*}

\begin{flalign*}
\uoMatchSize(u, \lambda_{37}) & = \sum_{v \in N(u)} \left[ E_5(u,v)(d(v) - 2) \right] - 2\uoMatchSize(u, \lambda_{12}) &
\end{flalign*}

\begin{flalign*}
\uoMatchSize(u, \lambda_{38}) & = \uoMatchSize(u, \lambda_{8})(d(u) - 2) - \uoMatchSize(u, \lambda_{13}) &
\end{flalign*}

\begin{flalign*}
\uoMatchSize(u, \lambda_{39}) & = \sum_{v \in N(u)} \left[ \uoMatchSize(v, \lambda_{13}) \right] - \uoMatchSize(u, \lambda_{13}) - 2\uoMatchSize(u, \lambda_{12}) &
\end{flalign*}

\begin{flalign*}
\uoMatchSize(u, \lambda_{40}) & = \sum_{v \in N(u)} E_9(u,v)(d(v) - 3) &
\end{flalign*}

\begin{flalign*}
\uoMatchSize(u, \lambda_{41}) & =  \sum_{v \in N(u)} {E_1(u,v) \choose 2} (d(v)-3) &
\end{flalign*}

\begin{flalign*}
\uoMatchSize(u, \lambda_{42}) & = \sum_{v \in N(u)} {E_1(u,v) \choose 2} (d(u) -3) &
\end{flalign*}

\begin{flalign*}
\uoMatchSize(u, \lambda_{43}) & = \sum_{t=(u,v,x)} \left[(\uoMatchSize(v, \lambda_{3}) -1) + (\uoMatchSize(x, \lambda_{3}) - 1) \right] \\& - 2\uoMatchSize(u, \lambda_{12})  - 2\uoMatchSize(u, \lambda_{13}) &
\end{flalign*}

\begin{flalign*}
\uoMatchSize(u, \lambda_{44}) & = {\uoMatchSize(u, \lambda_{3}) \choose 2} -\uoMatchSize(u, \lambda_{13}) &
\end{flalign*}

\begin{flalign*}
\uoMatchSize(u, \lambda_{45}) & =  \sum_{v \in N(u)} \left[\uoMatchSize(v, \lambda_{12})\right] - 2\uoMatchSize(u, \lambda_{13})-3\uoMatchSize(u, \lambda_{14}) &
\end{flalign*}

\begin{flalign*}
\uoMatchSize(u, \lambda_{46}) & =  \sum_{t=(u,v,x)} \left[ E_7(v,x) \right] - \uoMatchSize(u, \lambda_{11}) - 3\uoMatchSize(u, \lambda_{14}) &
\end{flalign*}

\begin{flalign*}
\uoMatchSize(u, \lambda_{47}) & = \uoMatchSize(u, \lambda_{12})(d(u) -2) - 3\uoMatchSize(u, \lambda_{14}) &
\end{flalign*}

\begin{flalign*}
\uoMatchSize(u, \lambda_{48}) & =  \sum_{t=(u,v,x)} [(E_1(u,v)-1)(d(x)-2) \\& + (E_1(u,x)-1)(d(v)-2)] -6\uoMatchSize(u, \lambda_{14}) &
\end{flalign*}

\begin{flalign*}
\uoMatchSize(u, \lambda_{49}) & = \sum_{v,x \in N(u)} {W(v,x)-1 \choose 2} &
\end{flalign*}

\begin{flalign*}
\uoMatchSize(u, \lambda_{50}) & = \sum_{v} {W(u,v) \choose 3} &
\end{flalign*}

\begin{flalign*}
\uoMatchSize(u, \lambda_{51}) & = \sum_{v} TT(u,v) (W(u,v) - 1) - 2\uoMatchSize(u, \lambda_{12}) &
\end{flalign*}

\begin{flalign*}
\uoMatchSize(u, \lambda_{52}) & = \sum_{t=(u,v,x)} \left[E_5(v,x)\right] - 2\uoMatchSize(u, \lambda_{13}) &
\end{flalign*}

\begin{flalign*}
\uoMatchSize(u, \lambda_{53}) & = \sum_{t=(u,v,x)} \left[E_5(u,v) + E_5(u,x)\right] \\& -2\uoMatchSize(u, \lambda_{13}) - 2\uoMatchSize(u, \lambda_{12}) &
\end{flalign*}

\begin{flalign*}
\uoMatchSize(u, \lambda_{54}) & = \sum_{t=(u,v,x)} {E_1(v,x)-1 \choose 2} &
\end{flalign*}

\begin{flalign*}
\uoMatchSize(u, \lambda_{55}) & = \sum_{v \in N(u)} {E_1(u,v) \choose 3} &
\end{flalign*}

\begin{flalign*}
\uoMatchSize(u, \lambda_{56}) & =  \sum_{v \in N(u)} \left[\uoMatchSize(v, \lambda_{14})\right] - 3\uoMatchSize(u, \lambda_{14}) &
\end{flalign*}

\begin{flalign*}
\uoMatchSize(u, \lambda_{57}) & =  \sum_{v \in N(u)} E_{11}(u,v)(d(v)-3) &
\end{flalign*}

\begin{flalign*}
\uoMatchSize(u, \lambda_{58}) & = \uoMatchSize(u, \lambda_{14}) (d(u) - 3) &
\end{flalign*}

\begin{flalign*}
\uoMatchSize(u, \lambda_{59}) & = \sum_{t=(u,v,x)} \left[ E_9(\langle x,v \rangle) + E_9(\langle v,x \rangle) \right] \\& - 2 \uoMatchSize(u, \lambda_{13}) - 6 \uoMatchSize(u, \lambda_{14}) &
\end{flalign*}

\begin{flalign*}
\uoMatchSize(u, \lambda_{60}) & =  \sum_{v \in N(u)} E_9(\langle v, u \rangle) (E_1(u,v) - 1) -6\uoMatchSize(u, \lambda_{14}) &
\end{flalign*}

\begin{flalign*}
\uoMatchSize(u, \lambda_{61}) & = \sum_{t=(u,v,x)} \left[(E_1(u,v)-1) (E_1(u,x)-1)\right] \\& - 3\uoMatchSize(u, \lambda_{14}) &
\end{flalign*}

\begin{flalign*}
\uoMatchSize(u, \lambda_{62}) & = \sum_{v,x \in N(u)} \left[\cc(v,x)\right] -\uoMatchSize(u, \lambda_{13}) &
\end{flalign*}

\begin{flalign*}
\uoMatchSize(u, \lambda_{63}) & = \sum_{v} \cc(u,v) (W(u,v)-2) &
\end{flalign*}

\begin{flalign*}
\uoMatchSize(u, \lambda_{64}) & = \sum_{v, x \in N(u)} \cc(v,u,x) (W(v,x)-2) &
\end{flalign*}

\begin{flalign*}
\uoMatchSize(u, \lambda_{65}) & =  \sum_{t=(u,v,x)} \left[ E_{11}(v,x)\right] -3\uoMatchSize(u, \lambda_{14}) &
\end{flalign*}

\begin{flalign*}
\uoMatchSize(u, \lambda_{66}) & = \sum_{\left\langle u,v,x,y \right\rangle \text{ is 4-clique}} \left[ E_1(v,x) + E_1(v,y) + E_1(x,y) \right] &
\end{flalign*}

\begin{flalign*}
\uoMatchSize(u, \lambda_{67}) & = \sum_{v \in N(u)} E_{11}(u,v)(E_1(u,v)-2) &
\end{flalign*}

\begin{flalign*}
\uoMatchSize(u, \lambda_{68}) & = \sum_{\substack{v,x \text{ where} \\ \langle u,v,x \rangle \text{ is a wedge}}} {\cc(u,v,x) \choose 2} &
\end{flalign*}

\begin{flalign*}
\uoMatchSize(u, \lambda_{69}) & = \sum_{v,x \in N(u)} {\cc(u,v,x) \choose 2}&
\end{flalign*}

\begin{flalign*}
\uoMatchSize(u, \lambda_{70}) & =  \sum_{\left\langle u,v,x,y \right\rangle \text{ is 4-clique}} K_4(v, x, y) - 1 &
\end{flalign*}

\begin{flalign*}
\uoMatchSize(u, \lambda_{71}) & = \sum_{t=(u,v,x)} {K_4(t) \choose 2} &
\end{flalign*}

\end{theorem}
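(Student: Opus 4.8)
The idea is to obtain each displayed equation as a direct instance of \Lem{cut-orb} (together with \Cor{orbit} to pass between a representative $\lambda_i = r(\theta_i)$ and the orbit $\theta_i$), and then to check that the resulting right-hand side can be evaluated within the budget of \Thm{running-time}. I would run the argument as an induction on $i$: when handling $\theta_i$ I may assume $\uoMatchSize(\cdot,\lambda_0),\ldots,\uoMatchSize(\cdot,\lambda_{i-1})$ have been computed, that all $4$-vertex edge orbit counts $E_0,\ldots,E_{11}$ are available by \Thm{4orbit}, and that each edge carries its list of incident triangles. Orbits $0$--$14$ are covered by \Lem{4vertexorbit}; the $5$-cycle $\theta_{34}$ is handled separately in \Thm{5-Cycle-Clique_running-time}; and the $5$-clique $\theta_{72}$ is obtained by directly enumerating directed bipyramids in $O(\dbp(G^\rightarrow))$ time, as in~\cite{PiSeVi17}. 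So the remaining work is orbits $15$--$33$ and $35$--$71$.

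\textbf{Correctness.} For each such $\theta_i=(H,S)$ I would choose a cut set $C \ni i$; for nearly all of these orbits $H|_C$ is a single vertex, a single edge, or a triangle, so $\oMatch(v,(H|_C,i))$ ranges over the neighbors of $v$, the (appropriately oriented) wedges at $v$, or the triangles incident to $v$. The leading sum $\sum_\sigma \prod_{F \in \frag_C(H)} \deg_F(\sigma)$ of \Lem{cut-orb} then unfolds into exactly the products appearing in the formula: each $\deg_F(\sigma)$ is a vertex orbit count of the fragment pattern $F$ at the appropriate cut vertex, which for small fragments is a bare $d(\cdot)-j$, a $\binom{\cdot}{2}$, or a $T(\cdot,\cdot)$/$E_1$ factor, and for larger fragments is a $4$-vertex orbit count already available by \Thm{4orbit}. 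The correction $\sum_{H' \in \shrink_C(H)} \sum_{\theta'} \numshrink_C(\lambda_i,\lambda') \uoMatchSize(v,\lambda')$ produces the subtracted terms; since a $C$-shrinkage strictly reduces the number of vertices, every $H'$ has at most four vertices, so each $\uoMatchSize(v,\lambda')$ is known. The integer multipliers are the $\numshrink_C$ values, which I would pin down by the same finite, mechanical case analysis illustrated for orbit $26$, after which \Cor{orbit} converts $\oMatchSize(v,\lambda_i)$ to the stated $\uoMatchSize(v,\theta_i)$. A handful of orbits ($62, 63, 64, 68, 69$) instead use a diamond-based cut in which one fragment is ``half'' of a diamond and the quantities $\cc(u,v,w), \cc(u,v)$ are the relevant diamond counts; the formulas again follow by reading off \Lem{cut-orb}.

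\textbf{Running time.} I would enumerate the four patterns of \Fig{5vertex-basis} once each, maintaining vertex- and edge-indexed arrays: wedges in $O(W(G))$, directed $3$-paths in $O(\dpath(G^\rightarrow))$, directed bipyramids in $O(\dbp(G^\rightarrow))$, and diamonds, grouped by one chosen vertex, in $O(\cc(G) + W(G) + m)$ using the per-edge triangle lists. Each formula is then a sum over one of these families (or over the cheaper triangles, edges, or neighbors) whose summand reads a bounded number of precomputed vertex/edge quantities, so a constant number of passes over each family suffices and the total is $O(W(G) + \cc(G) + \dpath(G^\rightarrow) + \dbp(G^\rightarrow) + m + n)$, matching \Thm{running-time}.

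\textbf{The main obstacle.} The delicate orbits are the diamond-indexed ones, e.g.\ $\lambda_{68}$ and $\lambda_{69}$, whose natural formula sums $\binom{\cc(u,v,x)}{2}$ over wedges or over neighbor pairs: materializing the diamond counts of all triples would cost $\Theta(n^3)$ in the worst case. The fix I would use is to process diamonds piecemeal — fix $u$, list the diamonds incident to $u$ via its incident triangles, accumulate the needed $\cc(u,\cdot,\cdot)$ and $\cc(u,\cdot)$ counts in scratch space of size $O(m)$, emit $u$'s orbit counts, and reset before moving to the next vertex. Beyond that, the remaining effort is pure bookkeeping: verifying, orbit by orbit for all fifty-plus equations, that the chosen cut set yields precisely the stated formula with the correct $\numshrink_C$ coefficients, and that its evaluation stays within budget.
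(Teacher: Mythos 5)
Your proposal is correct and follows essentially the same route as the paper: the equations are read off from \Lem{cut-orb} with vertex/edge/triangle (and, for the diamond-indexed orbits, larger) cut sets whose shrinkage corrections involve only already-computed $4$-vertex orbit counts, and the runtime bound comes from enumerating the patterns of \Fig{5vertex-basis} while computing the expensive per-pair quantities (wedge counts, diamond counts $\cc(u,v,x)$, $K_4(t)$) piecemeal per vertex rather than precomputing them. The paper's appendix proof is the same sketch at the same level of detail, down to the observation that diamond counts are too costly to store globally and must be generated on the fly from per-edge triangle information.
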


\begin{proof}\label{5vertexorbit-proof}
For orbits $\theta_{36}$, $\theta_{49}$, $\theta_{50}$, $\theta_{51}$, $\theta_{63}$, and $\theta_{64}$, we need the counts of wedges which have the vertex at hand in the middle as in the equation for $\theta_{49}$, or at one of the ends, as in the equation for $\theta_{36}$. We do not precompute and store these counts for all vertices as it could be expensive. But we can get these counts while counting, by enumerating wedges in time $O(W(G))$~\cite{PiSeVi17}. In the equation for $\theta_{51}$, we need the counts of $TT(u,v)$. But this is easy to get while enumerating the wedges between $u$ and $v$, and using the triangle per-edge counts for edge $(x,v)$, where $(u,x,v)$ is a wedge. Equation of orbits $\theta_{62}$, $\theta_{63}$, $\theta_{64}$, $\theta_{68}$, and $\theta_{69}$ require the counts of diamonds. These counts are too expensive to precompute and store for all the vertices, so we do it while computing the counts for each vertex, using triangle counts for each edge. To compute the coutns of $\theta_{70}$ and $\theta_{71}$, we need to use the counts of 4-cliques incident to each triangle $t$, which we can get in  $O(W(G) + \cc(G) + m + n)$~\cite{PiSeVi17}.
\end{proof}

Finally, we can prove \Thm{running-time}.

\begin{proof}[Proof of \Thm{running-time}]
By \Thm{4orbit}, we know that we can obtain all the counts for orbits $\theta_0$-$\theta_{14}$ and $E_0$-$E_{11}$ in time $O(W(G) + \cc(G) + m + n)$. Also for each triangle $t$, we can get $K_4(t)$ and for each edge $e$, the list of triangles incident to $e$ in time $O(W(G) + \cc(G) + m + n)$~\cite{PiSeVi17}. 
We need to show that computing orbit counts for $\theta_{15}$-$\theta_{72}$ takes time $O(W(G) + \cc(G) + \dpath(G^\rightarrow) + \dbp(G^\rightarrow) + m + n)$. By \Thm{5-Cycle-Clique_running-time}, $\theta_{34}$ (the only orbit in 5-cycle) counts can be obtained in time $O(W(G) + \dpath(G^\rightarrow) + m + n)$ and the counts for $\theta_{72}$ (the only orbit in 5-clique) takes $O(\dpath(G^\rightarrow))$ to compute~\cite{PiSeVi17}. So, we only need to show that computing 5-VOCs except 5-cycle and 5-clique, using each of the equations in \Thm{5vertexorbit} takes time $O(W(G) + \cc(G) + \dpath(G^\rightarrow) + \dbp(G^\rightarrow) + m + n)$.

We divide the set of orbits of 5-vertex patterns to categories with different runtime. When analysing the runtime for equation Of $\theta_i$, we assume that we have access to the counts for $\theta_0$-$\theta_{i-1}$ and all edge orbit counts $E_0$-$E_{11}$, as we have stored them previously. Orbit in each category are shown in \Tab{orbits-category}.

\begin{asparaitem}
\item Orbits that we can count in time $O(n)$ for all vertices:

Computing these vertex orbits, we only need to pass over vertices in $G$, and then it is straightforward to get the counts for each vertex in constant time using the equations in \Thm{5vertexorbit}.

\item Orbits that we can count in time $O(m+n)$ for all vertices:
In this category, to compute the counts for each vertex in $G$, we enumerate its neighborhood. This takes time $O(m+n)$ overall.

\item Orbits that we can count in time $O(W(G) + m + n)$:
Enumerating all the wedges suffices to compute the counts for $\theta_{36}$, $\theta_{49}$, and $\theta_{50}$ using their equations. While enumerating wedges to get the counts of orbit $\theta_{51}$ for vertex $u$, we need the count of tailed triangles incident to edge $(u,v)$, where $u$ is the tail vertex ($\theta_9$) and $v$ is in $\theta_{10}$. But this is easy to get using triangle counts, while $(u,v)$ is the wedge at hand during the wedge enumeration.

The rest of the orbits in this category could be obtained by enumerating all the triangles, which is possible in $O(W(G))$.

\item Orbits that we can count in time $O(W(G) + \cc(G) + m + n)$:
For $\theta_{66}$ and $\theta_{70}$, we need to enumerate 4-cliques, which takes time $O(W(G) + \cc(G) + m + n)$~\cite{PiSeVi17}. Getting counts of orbit $\theta_{71}$ requires enumeration of triangles, but for each triangle $t$ at hand, we need to get $K_4(t)$, which is overall possible in time $O(W(G) + \cc(G) + m + n)$.

To get the rest of the orbit counts in this category, we need to enumerate diamonds. Similar to the way we enumerate wedges while enumerating neighbors of a vertex, instead of precomputing and storing all the wedge counts, we enumerate diamonds while enumerating wedges, using triangle counts that we already have.

\end{asparaitem}

\end{proof}

\begin{table}[t]
\caption{\small Time for computing 5-VOCs for all vertices using equations in \Thm{5vertexorbit}}
\begin{center}
\begin{tabular}{c|c}
5-VOC runtime & Orbits\\\hline
$O(n)$ & $\theta_{16}$, $\theta_{17}$, $\theta_{23}$, $\theta_{33}$, $\theta_{38}$, $\theta_{44}$, $\theta_{47}$, $\theta_{58}$ \\\hline
\multirow{3}{*}{$O(m+n)$} & $\theta_{15}$, $\theta_{18}$, $\theta_{19}$, $\theta_{20}$, $\theta_{21}$, $\theta_{22}$, $\theta_{24}$, $\theta_{27}$\\ & $\theta_{28}$, $\theta_{30}$, $\theta_{31}$, $\theta_{35}$, $\theta_{37}$, $\theta_{39}$, $\theta_{40}$, $\theta_{41}$,\\& $\theta_{42}$, $\theta_{45}$, $\theta_{55}$, $\theta_{56}$, $\theta_{57}$, $\theta_{60}$, $\theta_{67}$ \\\hline
\multirow{3}{*}{$O(W(G) + m + n)$} & $\theta_{25}$, $\theta_{26}$, $\theta_{29}$, $\theta_{32}$, $\theta_{36}$, $\theta_{43}$, $\theta_{46}$, $\theta_{48}$,\\& $\theta_{49}$, $\theta_{50}$, $\theta_{51}$, $\theta_{52}$, $\theta_{53}$, $\theta_{54}$, $\theta_{59}$, $\theta_{61}$, $\theta_{65}$ \\
\hline
\end{tabular}
\end{center}
\label{tab:orbits-category}
\end{table}

\begin{figure}[b]
\[ A=
\left(
\begin{array}{ccccccccccc}

1 & 0 & 0 & 0 & 2 & 2 & 1 & 0 & 4 & 2 & 6 \\
0 & 1 & 0 & 0 & 2 & 0 & 1 & 2 & 2 & 2 & 6 \\
0 & 0 & 1 & 0 & 0 & 1 & 1 & 0 & 2 & 1 & 3 \\
0 & 0 & 0 & 1 & 0 & 0 & 0 & 1 & 0 & 1 & 1 \\
0 & 0 & 0 & 0 & 1 & 0 & 0 & 0 & 1 & 1 & 3 \\
0 & 0 & 0 & 0 & 0 & 1 & 0 & 0 & 2 & 0 & 3 \\
0 & 0 & 0 & 0 & 0 & 0 & 1 & 0 & 2 & 2 & 6 \\
0 & 0 & 0 & 0 & 0 & 0 & 0 & 1 & 0 & 2 & 3 \\
0 & 0 & 0 & 0 & 0 & 0 & 0 & 0 & 1 & 0 & 3 \\
0 & 0 & 0 & 0 & 0 & 0 & 0 & 0 & 0 & 1 & 3 \\
0 & 0 & 0 & 0 & 0 & 0 & 0 & 0 & 0 & 0 & 1 \\

\end{array}
\right)
\]
\caption{Matrix transforming induced vertex orbit counts for orbits 0-14 to non-induced counts}
\label{fig:A}
\end{figure}

\begin{figure}[b]
\[
A^{-1}=
\left(
\begin{array}{ccccccccccc}
1 & 0 & 0 & 0 & -2 & -2 & -1 & 0 & 4 & 2 & -6 \\                                                                                                                                      
0 & 1 & 0 & 0 & -2 &  0 & -1 & -2 & 2 & 6 & -12 \\                         
0 & 0 & 1 & 0 & 0 & -1 & -1 & 0 & 2 & 1 & -3 \\
0 & 0 & 0 & 1 & 0 & 0 & 0 & -1 & 0 & 1 & -1 \\
0 & 0 & 0 & 0 & 1 & 0 & 0 & 0 & -1 & -1 & 3 \\
0 & 0 & 0 & 0 & 0 & 1 & 0 & 0 & -2 & 0 & 3 \\
0 & 0 & 0 & 0 & 0 & 0 & 1 & 0 & -2 & -2 & 6 \\
0 & 0 & 0 & 0 & 0 & 0 & 0 & 1 &  0 & -2 & 3 \\
0 & 0 & 0 & 0 & 0 & 0 & 0 & 0 & 1 & 0 & -3 \\
0 & 0 & 0 & 0 & 0 & 0 & 0 & 0 & 0 & 1 & -3 \\
0 & 0 & 0 & 0 & 0 & 0 & 0 & 0 & 0 & 0 & 1 \\
\end{array}
\right)
\]
\caption{Matrix transforming non-induced vertex orbit counts for orbits 0-14 to induced counts}
\label{fig:Ainv}
\end{figure}

\end{document}